\DocumentMetadata{testphase=new-or-1} 
\documentclass[11pt]{article}
\usepackage{amsmath}
\usepackage{amssymb}
\usepackage{amsthm}
\usepackage{graphicx} 
\usepackage[margin=1in]{geometry}
\usepackage{hyperref}
\usepackage{xcolor}
\usepackage{multicol}
\usepackage{xurl}
\usepackage[utf8]{inputenc}
\usepackage{enumitem}
\usepackage{fontawesome}
\newcommand{\gd}{\color{blue}{\faDatabase}}
\newcommand{\bd}{{\color{red}{\faTimesCircle}}}
\newcommand{\forkv}{{\rotatebox[origin=c]{-90}{$\in$}}} 
\usepackage{tikz}
\usetikzlibrary{shapes.geometric}
\usetikzlibrary{arrows.meta}
\usetikzlibrary{calc}
\def\arraystretch{1.25}

\newcommand{\zp}{\operatorname{zp}}
\newcommand{\ZP}{\operatorname{ZP}}
\newcommand{\row}{\operatorname{row}}
\newcommand{\col}{\operatorname{col}}

\newcommand{\rank}{\operatorname{rank}}

\newcommand{\wt}{\operatorname{wt}}

\newcommand{\F}{\mathbb{F}}
\newcommand{\B}{\mathbb{B}}

\newcommand{\eps}{\varepsilon}

\newcommand{\recov}{\operatorname{recov}}

\newcommand{\R}{\mathbb{R}}

\newcommand{\MDS}{\operatorname{MDS}}
\newcommand{\LDMDS}{\operatorname{LD-MDS}}

\newtheorem{theorem}{Theorem}

\newtheorem{proposition}[theorem]{Proposition}

\theoremstyle{definition}
\newtheorem{definition}[theorem]{Definition}

\title{\textbf{Maximal Recoverability: A Nexus of Coding Theory}\footnote{This article is an extended version of a survey appearing in IEEE BITS. If one plans to cite this survey, please cite the magazine version~\cite{brakensiek2025maximal}.}}
\author{Joshua Brakensiek\thanks{JB is with the Department of Electrical Engineering and Computer Sciences, University of California, Berkeley. Contact: \href{mailto:josh.brakensiek@berkeley.edu}{josh.brakensiek@berkeley.edu}} \and $\qquad$ Venkatesan Guruswami\thanks{VG is with the Simons Institute for the Theory of Computing, and the Departments of EECS and Mathematics, University of California, Berkeley. Contact: \href{mailto:venkatg@berkeley.edu}{venkatg@berkeley.edu}}}
\begin{document}
\maketitle


\begin{abstract}
In the modern era of large-scale computing systems, a crucial use of error correcting codes is to judiciously introduce \emph{redundancy} to ensure \emph{recoverability} from failure. To get the most out of every byte, practitioners and theorists have introduced the framework of \emph{maximal recoverability} (MR) to study optimal error-correcting codes in various architectures. In this survey, we dive into the study of two families of MR codes: MR locally recoverable codes (LRCs) (also known as partial MDS codes) and grid codes (GCs).

For each of these two families of codes, we discuss the primary recoverability guarantees as well as what is known concerning optimal constructions. Along the way, we discuss many surprising connections between MR codes and broader questions in computer science and mathematics. For MR LRCs, the use of skew polynomial codes has unified many previous constructions. For MR GCs, the theory of higher order MDS codes shows that MR GCs can be used to construct optimal list-decodable codes. Furthermore, the optimally recoverable patterns of MR GCs have close ties to long-standing problems on the structural rigidity of graphs.
\end{abstract}


\section{Introduction}

In the design of storage architectures, one often encodes data using suitable error-correcting codes to ensure global data integrity in spite of many local failures. The fundamental goal of the study of error-correcting codes is to understand the tradeoff between \emph{redundancy} and \emph{recoverability}. In this survey, we focus on the study of \emph{linear} error-correcting codes, where the code $C$ is a subspace of the vector space $\F_q^n$, where $\F_q$ is a finite field. If the dimension of $C$ is $k$, we say $C$ is an $[n,k]$ code. For example, the seminal family of Reed--Solomon codes \cite{reed1960polynomial} consider the evaluations of a degree $< k$ polynomial at $n$ evaluation points in $\F_q$, where $q \ge n \ge k$. It is known that Reed--Solomon codes have optimal recoverability among all $[n,k]$ codes in that they attain the \emph{Singleton bound} (i.e., they are maximum distance separable (MDS) codes)~\cite{singleton1964maximum}. More precisely, 
\emph{every} subset of $n-k$ erased symbols can be recovered by interpolation on the remaining $k$ symbols.

This optimality of Reed--Solomon codes is a special case of a more general notion called \emph{maximal recoverability}.  We say that a pattern $E \subseteq [n] := \{1,2,\hdots, n\}$ is \emph{recoverable} (interchangeably \emph{correctable}) if for any $c \in C$, if we erase the coordinates indexed by $E$ from $c$, we can still uniquely identify $c$. For this survey, we introduce the notation $\recov(C)$ to denote the set of recoverable patterns. For any family $\mathcal{F}$ of $[n,k]$-codes, we let $\recov(\mathcal{F})$ denote the \emph{union} of $\recov(C)$ for all $C$ in $\mathcal{F}$.

 Note that recoverability is monotone: if $E \subseteq [n]$ is recoverable for $C$ and $E' \subseteq E$, then $E'$ is recoverable for $C$. Since $C$ is a linear code, a partial converse is true: if $E \subseteq [n]$ is recoverable for $C$, then there always exists $E' \supseteq E$ of size $n-k$ which is recoverable for $C$. As such, $\recov(C)$ (and thus $\recov(\mathcal{F})$) is completely determined by sets $E \in \recov(C)$ of size $n-k$. We now define maximal recoverability. We remark that the term ``maximally recoverable'' was coined by Chen, Huang, and Li~\cite{chen2007maximally,huang2013pyramid} in the context of distributed storage although the notion was 
 inspired by earlier results in the field of \emph{network coding}~\cite{koetter2003algebraic,jaggi2005polynomial}.

\begin{definition}[MR]
An $[n,k]$-code $C$ is \emph{maximally recoverable (MR)} for $\mathcal{F}$ if $\recov{C} = \recov{\mathcal{F}}$. 
\end{definition}

In particular, \emph{any} $[n,k]$-Reed--Solomon code is maximally recoverable for the family $\mathcal{F}$ of all $[n,k]$-codes. In general, given a family $\mathcal{F}$, it is not always clear whether such a MR code $C$ exists! However, for our applications, which are families of codes defined by the topological constraints of a data center, the family $\mathcal{F}$ has sufficient algebraic structure  
that there are infinitely many maximally recoverable codes, among which we seek to identify and construct the codes of minimal field size. In combinatorics, such questions are closely related to the theory of the \emph{representability of matroids}, see Section~\ref{subsec:matroid} and Section~\ref{sec:concl} for more details.
We also note that it is not always clear for a given family $\mathcal{F}$ if there is an efficient description of the recoverable patterns $\recov(\mathcal{F})$.

We now dive into the study of MR codes by carefully studying two families of MR codes: MR \emph{Locally Recoverable Codes} (MR LRC) and MR Grid Codes (MR GC). Although the two families of codes are quite similar in definition, the underlying nature of the codes are quite different. For example, merely describing the recoverable patterns of a MR LRC is a routine exercise, while doing the same for a MR GC is {an} open question in matroid theory.

\subsection{Preliminaries}

To establish some common notation, given a $[n,k]$-code $C \subseteq \F_q^n$, we define a \emph{generator matrix} of $C$ to be a matrix $G \in \F_q^{k \times n}$ such that the $k$ rows of $G$ form a basis for $C$. {T}he generator matrix of $C$ is not unique, but given any two generators $G_1, G_2 \in \F_q^{k \times n}$, there always exists an invertible matrix $M \in \F_q^{k \times k}$ such that $G_2 = M G_1$.

We let $\langle u, v\rangle := \sum_{i=1}^n u_iv_i$ denote the dot product between two vectors $u, v \in \F_q^n$. Given a $[n,k]$-code $C \subseteq \F_q^n$, we define its dual code $C^{\perp} \subseteq \F_q^n$ to be $\{c' \in \F_q^n \mid \forall c \in C, \langle c, c'\rangle = 0\}$. One can show that $C^{\perp}$ is always a $[n,n-k]$-code. We typically use the letter $H \in \F_q^{(n-k) \times n}$ to denote a generator matrix of $C^{\perp}$, which is also the \emph{parity check matrix} of $C$. Note that $c \in C$ if and only if $Hc = 0^{n-k}$.

Given a matrix $M \in \F_q^{k \times n}$ and a set $S \subseteq [n]$, we let $M|_{S}$ denote the $k \times |S|$ matrix consisting of the columns of $M$ indexed by $S$. Given a $[n,k]$-code $C$ and a generator matrix $G$, we let the \emph{puncturing} $C|_{S}$ denote the code generated by $G|_{S}$. We also note that puncturing the parity check matrix gives a simple test for recoverability.

\begin{proposition}\label{prop:recov-rank}
Let $H$ be the parity check matrix of a $[n,k]$-code $C$. We have that $E \subseteq [n]$ is recoverable if and only if the columns of $H$ spanned by $E$ are linearly independent--that is, $\rank H|_{E} = |E|$. In particular, when $|E| =n-k$, $E$ is recoverable iff $\det (H|_{E}) \neq 0$.
\end{proposition}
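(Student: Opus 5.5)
The plan is to reduce recoverability to a statement about codewords supported inside $E$, and then read that statement off the parity check matrix using the characterization $c \in C \iff Hc = 0^{n-k}$.

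\textbf{Step 1: reformulate via linearity.} By definition, $E$ is recoverable when any two codewords $c, c' \in C$ that agree on $[n]\setminus E$ must be equal. Since $C$ is linear, $c - c' \in C$, and $c-c'$ vanishes outside $E$. Conversely, any codeword $d$ supported inside $E$ agrees with $0$ outside $E$. So $E$ is recoverable if and only if the only codeword $d \in C$ with $\operatorname{supp}(d) \subseteq E$ is $d = 0$.

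\textbf{Step 2: translate to $H$.} A vector $d \in \F_q^n$ supported in $E$ is determined by its restriction $x = d|_E \in \F_q^{|E|}$, and
\[
Hd = \sum_{i \in E} d_i H_i = H|_E\, x,
\]
where $H_i$ denotes the $i$th column of $H$. Since $d \in C$ iff $Hd = 0$, the codewords supported in $E$ correspond bijectively to the kernel of $H|_E$. Hence $E$ is recoverable iff $\ker H|_E = \{0\}$. This holds iff the columns of $H|_E$ are linearly independent, i.e.\ $\rank H|_E = |E|$.

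\textbf{Step 3: the square case.} When $|E| = n-k$, the matrix $H|_E$ is $(n-k)\times(n-k)$. Its columns are independent iff it is invertible, iff $\det(H|_E) \ne 0$.

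There is no real obstacle here. The only point needing care is Step 1: making precise that ``uniquely identify $c$'' means injectivity of the restriction map $c \mapsto c|_{[n]\setminus E}$ on $C$. Linearity then turns injectivity into a trivial-kernel condition.
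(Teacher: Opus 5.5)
Your proof is correct and complete. Linearity turns unique identification into ``no nonzero codeword is supported in $E$,'' and extending by zero identifies those codewords with $\ker H|_{E}$, which is the standard argument. The paper states this proposition without proof as a well-known fact, so there is no different route to compare against.
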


By convention, we define a \emph{parity check code} to be the $[n,n-1]$ code which has the row vector $1^n$ as a parity check matrix.

\section{Maximally Recoverable Locally Recoverable Codes}

One disadvantage of an MDS code such as a Reed--Solomon code is that it lacks \emph{locality}. That is, to correct even a single erasure in an $[n,k]$-Reed--Solomon code, one needs to read $k$ other symbols. {An alternative family of codes which reduces the number of reads is that of} \emph{locally recoverable codes (LRC)} (e.g.,~\cite{chen2007maximally,huang2012erasure}). The topology of an LRC consists of four parameters $(n,r,a,h)$. Here, $n$ is the number of symbols, $r$ is the size of each local groups ($g := n/r$ total groups), $a$ is the number of parity checks per local group, and $h$ is the number of global parity checks---a fail-safe for more than $a$ erasures in a local group. The dimension of such a code is $k = n-(n/r)a-h$. We further impose that each of the groups are ``locally MDS'' in the sense that any $r-a$ symbols within a group can reconstruct the other $a$ symbols.

A more precise description of an MR LRC, which was also studied earlier under the name partial MDS codes~\cite{BlaumPS13}, can be made in terms of its parity check matrix. For all $i \in [g]$, consider matrices $A_i \in \F_q^{a \times r}$ and $B_i \in \F_q^{h \times r}$. The parity check matrix of our MR LRC is {then} (cf.~\cite{gopi2020maximally})
\begin{equation}
    \label{eq:MR-LRC-parity-check}
H := \left[\begin{array}{c|c|c|c}
A_1 & 0 & \cdots & 0\\\hline
0 & A_2 & \cdots & 0\\\hline
\vdots & \vdots & \ddots & \vdots\\\hline
0 & 0 & \cdots & A_{g}\\\hline
B_1 & B_2 & \cdots & B_g
\end{array}\right].
\end{equation}

\smallskip
Equivalently, from an encoding point of view, an $(n,r,h,a)$-LRC is obtained by adding $h$ global parity checks to $k$ data symbols, partitioning these $k+h$ symbols into local groups of size $r-a$, and then adding `$a$' local parity checks for each local group. 
See Figure~\ref{Fig:LRC} for an illustration. 

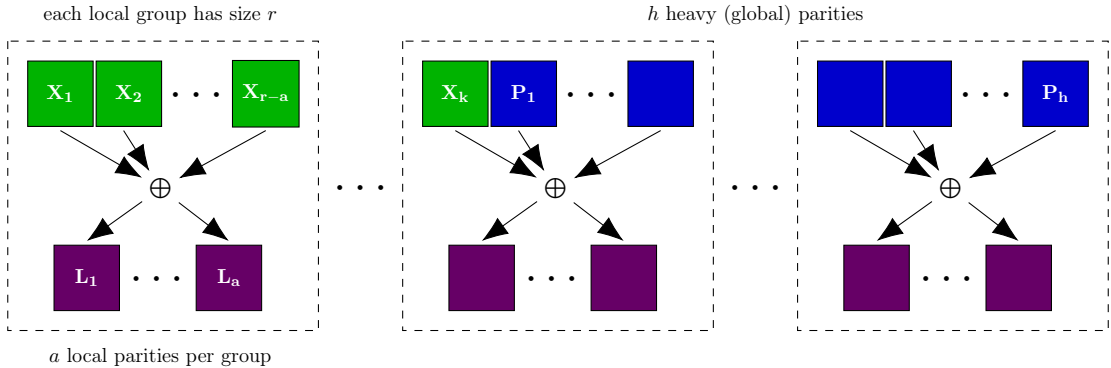
\begin{figure*}[h]

\begin{center}
\begin{tikzpicture}[scale=0.7,square/.style={regular polygon, regular polygon sides=4,minimum size=50,inner sep=0},every node/.style={transform shape}]

  \def\A{0}
  \def\B{7.5}
  \def\C{15}

  \node at (\A+2.9,0.5) {each local group has size $r$};
  \node at (\A+2.9,-6.0) {$a$ local parities per group};
  \node at (\B+6.7,0.5) {$h$ heavy (global) parities};

  \node at (\A+1,-1) [square,draw,fill=green!70!black] (X1) {\color{white}$\mathbf{X_1}$};
  \node at (\A+2.3,-1) [square,draw,fill=green!70!black] (X2) {\color{white}$\mathbf{X_2}$};
  \node at (\A+3.65,-1.03) {\Huge$\cdots$};
  \node at (\A+4.9,-1) [square,draw,fill=green!70!black] (Xra) {\color{white}$\mathbf{X_{r-a}}$};

  \node at (\A+2.9,-2.8) (xor1) {$\bigoplus$};
  \draw [-{Latex[length=10]}] (\A+1,-1.7)--(xor1);
  \draw [-{Latex[length=10]}] (\A+2.2,-1.7)--(xor1);
  \draw [-{Latex[length=10]}] (\A+4.9,-1.7)--(xor1);

  \node at (\A+1.5,-4.5) [square,draw,fill=violet!80!black] (L1) {\color{white}$\mathbf{L_1}$};
  \node at (\A+2.9,-4.53) {\Huge$\cdots$};
  \node at (\A+4.2,-4.5) [square,draw,fill=violet!80!black] (La) {\color{white}$\mathbf{L_a}$};
  \draw [-{Latex[length=10]}] (xor1)--(\A+1.5,-3.8);
  \draw [-{Latex[length=10]}] (xor1)--(\A+4.2,-3.8);

  \draw[dashed] (\A,0) rectangle (\A+5.9,-5.5);

  \node at (\A+6.77,-2.8) {\Huge$\cdots$};

  \node at (\B+1,-1) [square,draw,fill=green!70!black] (Xk) {\color{white}$\mathbf{X_k}$};
  \node at (\B+2.3,-1) [square,draw,fill=blue!80!black] (X2) {\color{white}$\mathbf{P_1}$};
  \node at (\B+3.65,-1.03) {\Huge$\cdots$};
  \node at (\B+4.9,-1) [square,draw,fill=blue!80!black] (Xra) {};

  \node at (\B+2.9,-2.8) (xor1) {$\bigoplus$};
  \draw [-{Latex[length=10]}] (\B+1,-1.7)--(xor1);
  \draw [-{Latex[length=10]}] (\B+2.2,-1.7)--(xor1);
  \draw [-{Latex[length=10]}] (\B+4.9,-1.7)--(xor1);

  \node at (\B+1.5,-4.5) [square,draw,fill=violet!80!black] (L1) {};
  \node at (\B+2.9,-4.53) {\Huge$\cdots$};
  \node at (\B+4.2,-4.5) [square,draw,fill=violet!80!black] (La) {};
  \draw [-{Latex[length=10]}] (xor1)--(\B+1.5,-3.8);
  \draw [-{Latex[length=10]}] (xor1)--(\B+4.2,-3.8);

  \draw[dashed] (\B,0) rectangle (\B+5.9,-5.5);

  \node at (\B+6.77,-2.8) {\Huge$\cdots$};

  \node at (\C+1,-1) [square,draw,fill=blue!80!black] (X1) {};
  \node at (\C+2.3,-1) [square,draw,fill=blue!80!black] (X2) {};
  \node at (\C+3.65,-1.03) {\Huge$\cdots$};
  \node at (\C+4.9,-1) [square,draw,fill=blue!80!black] (Xra) {\color{white}$\mathbf{P_h}$};

  \node at (\C+2.9,-2.8) (xor1) {$\bigoplus$};
  \draw [-{Latex[length=10]}] (\C+1,-1.7)--(xor1);
  \draw [-{Latex[length=10]}] (\C+2.2,-1.7)--(xor1);
  \draw [-{Latex[length=10]}] (\C+4.9,-1.7)--(xor1);

  \node at (\C+1.5,-4.5) [square,draw,fill=violet!80!black] (L1) {};
  \node at (\C+2.9,-4.53) {\Huge$\cdots$};
  \node at (\C+4.2,-4.5) [square,draw,fill=violet!80!black] (La) {};
  \draw [-{Latex[length=10]}] (xor1)--(\C+1.5,-3.8);
  \draw [-{Latex[length=10]}] (xor1)--(\C+4.2,-3.8);

  \draw[dashed] (\C,0) rectangle (\C+5.9,-5.5);

\end{tikzpicture}
\end{center}
\caption{This diagram visualizes the encoding map of a locally recoverable code. First, $k$ data symbols are expanded to $k+h$ symbols using $h$ global parity checks. Then, these symbols are broken up into $\frac{k+h}{r-a}$ local groups, each of which is expanded with $a$ more parity checks. Illustration adapted from Figure 1 of \cite{gopi2020maximally}.}\label{Fig:LRC}
\end{figure*}

Subject to these restrictions, what does it mean for an LRC to be maximally recoverable? That is, which erasure patterns are recoverable? It turns out there is a simple description (e.g., \cite{gopalan2014explicit}).
\begin{theorem}\label{thm:MR-LRC}
In an $(n,r,a,h)$-MR LRC, a pattern $E \subseteq [n]$ is recoverable, if and only if there exists $E' \subseteq E$ of size at least $|E|-h$ such that each local group intersects with $E'$ in at most $a$ symbols.
\end{theorem}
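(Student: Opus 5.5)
By Proposition~\ref{prop:recov-rank}, $E$ is recoverable for a code with parity check matrix $H$ of the form \eqref{eq:MR-LRC-parity-check} exactly when $\rank H|_E = |E|$. Since $\recov(\mathcal{F})$ is the union over all codes in the family, it suffices to prove two things. First, if the combinatorial condition fails, then $H|_E$ is rank-deficient for \emph{every} choice of the $A_i,B_i$ (necessity). Second, if the condition holds, then some choice of $A_i,B_i$ makes $H|_E$ have full column rank (sufficiency). Write $E_i := E \cap (\text{group } i)$. I will take the locally MDS requirement into account throughout, so that every $A_i$ has all $a\times a$ minors nonzero.

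\emph{Necessity.} The condition is equivalent to $\sum_i \max(|E_i|-a,0) \le h$: the best $E'$ keeps $\min(|E_i|,a)$ elements from each group. So suppose $\sum_i \max(|E_i|-a,0) > h$. Column block $i$ of $H|_E$ meets the local rows only in the $a$ rows of $A_i$. Hence the local rows contribute rank at most $\sum_i \min(|E_i|,a)$, and the $h$ global rows add at most $h$ more. The total is $\rank H|_E \le \sum_i \min(|E_i|,a) + h < \sum_i |E_i| = |E|$, so $E$ is not recoverable for any code in the family.

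\emph{Sufficiency.} Assume $\sum_i \max(|E_i|-a,0)\le h$. Extending $E$ within the allowed budget if needed (recoverability is monotone), I may assume every $|E_i|\ge a$ and $\sum_i(|E_i|-a)=h$, so that $H|_E$ is square. Now perform block elimination. Since $A_i$ is MDS, I can pick $a$ columns $T_i\subseteq E_i$ with $A_i|_{T_i}$ invertible. Column operations then clear $A_i$ on the remaining $E_i\setminus T_i$. What remains is $\det H|_E = \pm\prod_i \det A_i|_{T_i}\cdot \det M$, where $M$ is the $h\times h$ matrix whose columns are the vectors $B_i e_j - B_i|_{T_i}(A_i|_{T_i})^{-1}A_i e_j$ for $j\in E_i\setminus T_i$. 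Equivalently, each group's contribution is $B_i$ restricted to the kernel directions of $A_i$ on $E_i$.

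It remains to show that $\det M$, viewed as a polynomial in the entries of the $B_i$, is not identically zero. The plan is to specialize $B_i|_{T_i}=0$. Then $M$ is just the $h\times h$ matrix of the columns $B_i e_j$ for $j\in E_i\setminus T_i$, which are $h$ distinct columns of $[B_1|\cdots|B_g]$, and choosing these to be the identity gives $\det M=1$. Consequently, for generic $B$ (over a large enough field, by Schwartz--Zippel applied to the product over all $E$ of these nonzero polynomials, together with the MDS minors of the $A_i$), a single code recovers every such $E$ simultaneously. That one code is therefore MR and realizes all patterns, which is a bit stronger than needed.

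The main obstacle is the sufficiency direction, and in particular justifying the reduction to square $H|_E$ while keeping the group structure intact. Adding elements to $E$ to reach $|E_i|\ge a$ and total slack exactly $h$ must be possible, which requires that groups have size $r\ge a$ and that $|E|\le n-k$ is consistent with the condition. Verifying the block Schur-complement factorization of the determinant is routine.
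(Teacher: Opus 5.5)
Your proof is correct and takes essentially the same route as the paper. For necessity, you bound $\operatorname{rank} H|_E$ by the $a$ local rows per group plus the $h$ global rows. For sufficiency, you reduce to a maximal pattern (a step the paper leaves implicit) and specialize $B$ to an identity on the $h$ surplus positions and zero elsewhere. Your Schur-complement factorization with an arbitrary MDS $A_i$ is a slightly more careful version of the paper's normalization $A_i|_{E_i}=I$, after which $H|_E$ is block triangular with identity diagonal blocks rather than literally a permutation matrix. The Schwartz--Zippel remark is extra and not needed for the statement.
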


\begin{proof} {First we prove that if $E$ is recoverable, then $E'$ exists.} By Proposition~\ref{prop:recov-rank} it suffices to determine which subsets of the columns of $H$ are linearly independent for some choice of $A_1, \hdots, A_g$ and $B_1, \hdots, B_g$ {with $g = n/r$}. Assume that $E \subseteq [n]$ {is a recoverable erasure pattern so} that $\rank H|_{E} = {|}E{|}$, {and assume for sake of contradition that no claimed $E'$ exists.} {As such, there is} ${F} \subseteq E$ {which spans}{$\ell \le g$} local groups of $E$ {but} $|{F}| > a\ell + h$. Note that $H|_{{F}}$ has at most $a\ell + h$ nonzero rows. Therefore, $\rank H|_{{F}} < |{F}|$, a contradiction of the fact that $E$ and thus ${F}$ is correctable. {Therefore, the claimed $E'$ indeed exists.} 

{Conversely, we prove the existence of $E'$ implies $E$ is recoverable. Assume} that $E = E_0 \cup E_1 \cup \cdots \cup E_g \subseteq [n]$ has $a$ symbols $E_i$ from each local group $i \in [g]$ plus $h$ additional symbols $E_0$. It suffices to exhibit one specific choice of $A_1, \hdots, A_g, B_1, \hdots, B_g$ for which $H|_{E}$ has rank $|E|$. One way to do this is as follows:

\begin{itemize}
\item For all $i \in [g]$, set $A_i$ to be an MDS matrix such that $A_i|_{E_i}$ to be a copy of the identity matrix. 
\item Set $\left.\begin{bmatrix}B_1 & B_2 & \cdots & B_g\end{bmatrix}\right|_{E_0}$ to be a copy of $I_h$, with the remainder of the row block equal to zero.
\end{itemize}

With this choice of $H$, the submatrix $H|_{E}$ is a permutation matrix and thus has rank $|E|$. Thus, $E$ is recoverable in a $(n,r,a,h)$-MR LRC.
\end{proof}

With the recoverable patterns fully characterized, the primary challenge in the study of MR LRCs as follows: given parameters $(n,r,a,h)$ what is the size of the smallest field size $q$ for which an $(n,r,a,h)$-MR LRC exists over $\F_q$?
We begin with a random coding result over large fields, which is in fact prototypical of the way one establishes the \emph{existence} of MR codes in the first place.
\begin{theorem}\label{thm:mr-lrc-random}
In the parity check matrix $H$ from \eqref{eq:MR-LRC-parity-check}, let the matrices $A_i$ all be equal to the same parity check matrix of some $[r,r-a]$ MDS code {over $\F_q$}, say a Reed-Solomon code.
Let the matrices $B_i$, $i \in [g]$, be filled in with independent uniformly random entries from $\F_q$. Then provided $q \gg \binom{r}{a}^{n/r} n^h$, the resulting code is $(n,r,a,h)$-MR LRC with high probability.
\end{theorem}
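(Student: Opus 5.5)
By Theorem~\ref{thm:MR-LRC} and the monotonicity of recoverability, it suffices to handle the maximal recoverable patterns. These are the sets $E = E_0 \cup E_1 \cup \cdots \cup E_g$ with $|E_i| = a$ in each local group $i$ and $|E_0| = h$ additional symbols, so that $|E| = ga + h = n-k$. (Any recoverable pattern extends to one of this form: enlarge $E'$ to have exactly $a$ symbols in each group, and put the rest of $E$ plus padding into $E_0$.) By Proposition~\ref{prop:recov-rank}, the code is MR iff $\det(H|_E) \neq 0$ for every such $E$. The plan is to show that for each fixed $E$ this determinant is a nonzero polynomial of low degree in the random entries of the $B_i$, apply the Schwartz--Zippel lemma, and then take a union bound over all such $E$.

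For fixed $E$, view $\det(H|_E)$ as a polynomial in the entries of $B_1,\dots,B_g$; the $A_i$ are fixed. Each entry of $B$ appears in only one column of $H|_E$, and $B$ occupies only the last $h$ rows, so the degree is at most $h$. To see that the polynomial is nonzero, I would simplify by row operations rather than reuse the construction in the proof of Theorem~\ref{thm:MR-LRC}, which changes the $A_i$.

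\begin{itemize}
\item Since $A$ is the parity check matrix of an MDS code, every $a$ columns of $A$ are independent, so $A|_{E_i}$ is invertible for each $i$.
\item Using the block structure, subtract suitable combinations of the local rows from the global rows to clear the $B$-entries in the columns of each $E_i$.
\item What remains is $\det(H|_E) = \pm \prod_i \det(A|_{E_i}) \cdot \det(M)$. Here $M$ is an $h\times h$ matrix whose column for $j \in E_0$ (in group $i$) equals $B_i e_j - B_i|_{E_i}(A|_{E_i})^{-1}A e_j$.
\end{itemize}

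Now specialize $B$ so that, for each $j \in E_0$, the columns indexed by $E_i$ are zero and the column of $j$ is a distinct standard basis vector. Then $M$ becomes a permutation matrix. This is essentially the same specialization as in the proof of Theorem~\ref{thm:MR-LRC}, but now with the $A_i$ held fixed. One caveat is that when $j \in E_0$ lies in group $i$, the column of $j$ is not in $E_i$, so zeroing $B|_{E_i}$ does not conflict with setting column $j$. With this choice $\det(H|_E) \neq 0$, so the polynomial is nonzero.

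By Schwartz--Zippel, $\Pr[\det(H|_E)=0] \le h/q$. The number of patterns $E$ is at most $\binom{r}{a}^{g}\binom{n}{h} \le \binom{r}{a}^{n/r} n^h$. A union bound then gives a failure probability of at most $h\binom{r}{a}^{n/r}n^h/q$, which is $o(1)$ when $q \gg \binom{r}{a}^{n/r}n^h$; the factor $h \le n$ is absorbed in the $\gg$ or in the $n^h$ slack. The main obstacle is the nonvanishing step, specifically checking carefully that the Schur-complement reduction is valid when $E_0$ shares local groups with the $E_i$. That reduction is exactly what the specialization argument above handles.
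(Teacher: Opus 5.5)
Your proposal is correct and follows essentially the same route as the paper's sketch. Both reduce to maximal patterns, eliminate the local blocks so that $\det(H|_E)$ becomes a nonzero multiple of an $h\times h$ determinant of linear forms in the $B$-entries, and finish with Schwartz--Zippel (degree at most $h$) and a union bound. The only differences are minor: you eliminate by row operations where the paper uses column operations, which yields the same Schur complement. You also make the nonvanishing of the $h\times h$ determinant explicit via a specialization with the $A_i$ held fixed, a step the paper leaves implicit.
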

The proof idea is as follows. {In \eqref{eq:MR-LRC-parity-check}, replace each $B_i$ with distinct indeterminate variables $X_{t,j}$, $t \in [h]$, $j \in [n]$. For every maximal correctable erasure pattern $E$ meeting the criteria of Theorem~\ref{thm:MR-LRC}, we seek to prove $\det H|_{E}$ is a nonzero polynomial of degree at most $h$.} One can do column operations to zero out all columns within each $A_i$ except the first $a$ columns erased within group $i$. The resulting determinant will be a nonzero multiple of a determinant of an $h \times h$ matrix whose entries are some linear forms in the $X_{t,j}$'s. Thus, if we assign the indeterminates random values from $\F_q$, the determinant will vanish with probability at most $\frac{h}{q}$ by the Schwarz-Zippel lemma. There are in total at most $\binom{r}{a}^{n/r}\binom{n - ar}{h}$ correctable patterns and thus determinants to worry about, so the chance any of them vanishes for random $\F_q$-values is at most $h \binom{r}{a}^{n/r} \binom{n}{h}/q$. For $q \gg \binom{r}{a}^{n/r} n^h$, this tends to $0$.


The result above has two significant drawbacks---it is non-constructive and only works over very large fields. There has been a huge amount of work giving varied, often incomparable,  MR LRC constructions, tailored to different parameter regimes (whether $r$ is close to $n$ or $\ll n$, how small $h$ is, how $h$ and $r$ compare, etc.) 
In terms of settings most relevant to distributed storage practice, one should think of the number of local groups $g=n/r$ as a constant and $n$ as growing. Typical values of $g$ used in practice are $g=2,3,4$. The number of global parities $h$ is also a small constant and the number of local parities $a$ is usually $1$ or $2$. For example, an early version of Microsoft's Azure storage used $(n=14,r=7,h=2,a=1)$-MR LRCs with $g=2$ local groups~\cite{huang2012erasure}.  These choices are dictated by the need to maximize storage efficiency while balancing reliability and fast reconstruction.  Contrast this with the parameters of interest from a theoretical point of view, where $r$ is sublinear in $n$ or even a constant in order to get good locality, and one might consider large $h$ to correct many erasures.

Returning to constructions of MR LRCs, let us articulate the main challenge. If our goal is to only recover from $a$ erasures in each group, taking each $A_i$ to be a Vandermonde matrix achieves this.
If the goal is to recover from $h$ arbitrarily distributed global erasures, that's also easy by simply taking the heavy parities (the $B_i$'s) to be an $h \times n$ Vandermonde matrix. The challenge is to be able to handle any combination of these local and global erasures. 

One elegant construction of MR LRCS, due to \cite{gopalan2014explicit}, is to take the heavy parities to be a \emph{Moore matrix}. Compared to a Vandermonde matrix where we take successive powers of the first row, in a Moore matrix we successively apply the Frobenius automorphism to each preceding row. The following is an $h \times n$ Moore matrix, where we require the $\alpha_i \in \F_q$ in the first row to be \emph{linearly independent} over a subfield $\F_p$ of $\F_q$ (which is stronger than the distinctness requirement in the Vandermonde case. This requires $q \ge p^n$ and to keep the field size small, we in fact take $q = p^n$.
\begin{equation*}
\left[\begin{array}{cccc}
\alpha_1 & \alpha_2 & \dots  & \alpha_n \\ 

\alpha_1^p & \alpha_2^p & \dots  & \alpha_n^p \\ 

\vdots & \vdots & \ddots & \vdots\\

\alpha_1^{p^{h-1}} & \alpha_2^{p^{h-1}} & \dots  & \alpha_n^{p^{h-1}} 
\end{array}\right]
\end{equation*}
It is a well-known algebraic fact that any $h$ columns of the above Moore matrix are linearly independent over $\F_q$ (the larger field). This aspect is similar to the guarantee offered by Vandermonde matrices. The additional power offered by the Moore matrix in the context of MR LRCs is the following: 
if we modify any column of the above matrix using column operations (with $\F_p$ coefficients), the resulting matrix is also a Moore matrix (this follows because $\alpha \mapsto \alpha^p$ is a $\F_p$-linear map on $\F_q$). One can then show, using a reasoning similar to the one used in the initial part of the random coding argument, that if the $A_i$'s are Vandermonde matrices over $\F_p$ (which we can achieve with $p = O(r)$, and even $p=2$ when $a=1$), the overall construction is an MR LRC. This achieves a field size $q \le O(r)^n$ (and at most $2^n$ in the case where $a=1$).
 
A closer inspection of the proof reveals that we don't really need all the $\alpha_i$'s to be linearly independent, and it suffices if any $(a +1)h$ of them are
linearly independent over $\F_p$. This can be achieved with $q \le O(n)^{(a+1) h}$ using suitable Reed-Solomon or BCH codes. In fact, for $a=1$, we can take $p=2$ and use binary BCH codes and achieve $q \le O(n)^{\lfloor (a+1)h/2\rfloor}$; see \cite{gopalan2014explicit} for details.

For small values of $h$ (a setting relevant in practice), one can get better field sizes, e.g., $O(r)$ for $h =0,1$~\cite{BlaumPS13}, $O(n)$ for $h=2$ and $O(n^3)$ for $h=3$~\cite{gopi2020maximally}. For a small number of local groups $g = n/r$ (also relevant in practice), a field size of $O(n)^{(g-1)(a+h/g)}$ can be achieved~\cite{hu2016new,dhar2023construction}.

See Table 1 of \cite{dhar2023construction} and Table 1 of \cite{cai2021construction} for pointers to several constructions of MR LRCs tailored to different parameter settings.
For most ranges of parameters, the best current constructions in terms of field size are due to \cite{martinez2019universal,gopi2022improved,cai2021construction}. In particular, a field size of 
\begin{equation}
    \label{eq:skew-poly-field-size}
O(\max\{n/r, r\}))^{\min\{h,r-a\}}
\end{equation}
can be achieved~\cite{gopi2022improved}, based on the theory of skew polynomials. 
Skew polynomials are a non-commutative analog of polynomials studied by Ore back in 1933. They are defined by the ring $\mathbb{K}[x;\sigma]$ of polynomials in $x$ with coefficients over $\mathbb{K}$ and right multiplication by a scalar coefficient defined as $x \cdot a = \sigma(a) x$ for a field homomorphism $\sigma : \mathbb{K} \to \mathbb{K}$. The construction of \cite{gopi2022improved} works with the specific case where $\mathbb{K} = \F_{p^m}$ is an extension field and $\sigma(a) = a^p$ is the Frobenius automorphism. One can partition $\mathbb{K}^*$ into $p-1$ ``conjugacy classes" such that any non-zero degree $d$ skew polynomial has roots in at most $d$ distinct conjugacy classes, and at most $d$ $\F_p$-linearly independent roots in any single conjugacy class. In the associated MR LRC construction, the local groups corresponds to these conjugacy classes. This approach yields 
matrices that blend together Vandermonde and Moore matrices in a manner conducive to MR LRCs over smaller fields; see \cite{cai2021construction, gopi2022improved} for details.

Despite significant efforts, the best constructions for most regimes still require super-linear, and typically $n^{\Omega(h)}$, field sizes. Is this necessary? In \cite{gopi2020maximally}, it was shown that the field size $q$ of an $(n,r,a,h)$-MR LRC must obey the lower bound $q  \ge \Omega_{h,a}(n \cdot r^{\min\{a, h-2\}})$ (for $h \le n/r$).  This shows that for small $h \le a+2$ and $r = n^{\Omega(1)}$, one needs a field size of $n^{\Omega(h)}$, giving some justification for the large field sizes in the constructions. In fact, when $h$ is a fixed constant with $h \le a+2$ and $r,r-a = \Theta(\sqrt{n})$, the field size \eqref{eq:skew-poly-field-size} achieved by the skew polynomials based construction becomes $O(n)^{h/2}$, which asymptotically matches the above lower bound! Besides the case of $h=2$ where the optimal field size is $\Theta(n)$, this is the only case where we know the optimal field size.

\section{Maximally Recoverable Grid and Tensor Codes}\label{sec:MRG}


A natural extension of MR LRCs is to have multiple sets of local repair groups. Such an extension was formulated in the context of codes with grid-like topologies by Gopalan et al. \cite{Gopalan2016}. More precisely, given parameters $(m,n,a,b,h)$, we can construct a grid code with $mn$ symbols, which we identify as the entries of an $m \times n$ matrix. For each of the $n$ columns we place $a$ parity checks on the $m$ symbols within that column. Likewise, for each of the $m$ rows, we place $b$ parity checks on the $n$ symbols within that row. {The parity checks are the same between columns and between rows.} Finally, we add $h$ additional parity checks, so the dimension of our code is $k = (m-a)(n-b) - h$. Such a topology has been used by Meta~\cite{MLRH14} {with $(m,n,a,b,h) = (3,14,1,4,0)$}. Note that if $a = 0$, we have the topology of an $(mn, n, b, h)$ LRC. {Going forward,} we always assume that $a, b \ge 1$.

\subsection{Characterizations of Correctable Patterns}

Unlike LRCs, recoverable patterns of MR grid codes (GCs) are generally poorly understood; however, major progress was made by Holzbauer, Puchinger, Yaakobi, and Wachter-Zeh~\cite{holzbaur2021correctable}.
\begin{theorem}[\cite{holzbaur2021correctable}]\label{thm:no-h}
A pattern $E \subseteq [m] \times [n]$ is correctable for an $(m,n,a,b,h)$-MR GC if and only if there is $E' \subseteq E$ with $|E'| \ge |E| - h$ that is correctable for an $(m,n,a,b,0)$-MR GC.
\end{theorem}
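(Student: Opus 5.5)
We prove the statement in the same two directions as Theorem~\ref{thm:MR-LRC}, using the rank criterion of Proposition~\ref{prop:recov-rank}. Write the parity check matrix of an $(m,n,a,b,h)$ grid code as a block matrix $H = \begin{bmatrix} H_0 \\ B\end{bmatrix}$. Here $H_0 = \begin{bmatrix} A \otimes I_n \\ I_m \otimes B' \end{bmatrix}$ is the parity check matrix of the $(m,n,a,b,0)$ grid code, with $A \in \F^{a\times m}$ the column checks and $B' \in \F^{b \times n}$ the row checks. The matrix $B \in \F^{h \times mn}$ holds the $h$ global checks. Note that the rows of $H_0$ may be linearly dependent (they have rank $mn-(m-a)(n-b)$), but only ranks matter below.

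\emph{Necessity.} Suppose $E$ is correctable for some code in the family, so $\rank H|_E = |E|$. Since $B$ contributes only $h$ rows, $\rank H_0|_E \ge |E| - h$. Choose $E' \subseteq E$ indexing a maximal set of linearly independent columns of $H_0|_E$. Then $|E'| \ge |E|-h$, and by Proposition~\ref{prop:recov-rank} $E'$ is correctable for the $(m,n,a,b,0)$ code with parity checks $H_0$. Hence $E'$ lies in $\recov$ of the $h=0$ family.

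\emph{Sufficiency.} Suppose $E' \subseteq E$ with $|E \setminus E'| \le h$ and $E'$ correctable for some $(m,n,a,b,0)$ code. First I would argue that a single generic choice of $(A,B')$ works simultaneously for all patterns. Each correctability condition is a nonvanishing of some maximal minor, i.e.\ a nonzero polynomial in the entries of $A, B'$ viewed as indeterminates; their product is still nonzero, so over a large enough field some $(A,B')$ makes $H_0$ MR for the $h=0$ family. This argument is the same as the one behind Theorem~\ref{thm:mr-lrc-random}. Fixing this $H_0$, we have $\rank H_0|_{E'} = |E'|$.

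Next we choose $B$. Extend $E'$ inside $E$ by monotonicity: since $\rank H_0|_E \ge |E'|$, we may assume $E'$ indexes a basis of the column space of $H_0|_E$. Set $D = E\setminus E'$ with $|D| \le h$. Apply column operations on $H|_E$ that subtract from each column indexed by $d \in D$ the combination of $E'$-columns agreeing with it in the $H_0$ part. After these operations the $H_0$-part of the $D$ columns is zero. Their $B$-part becomes $B|_D$ minus linear combinations of $B|_{E'}$. Take $B|_{E'} = 0$ and let $B|_D$ be the first $|D|$ columns of $I_h$ (zero elsewhere). Then $H|_E$ becomes block triangular with full-rank diagonal blocks $H_0|_{E'}$ and $B|_D$, so $\rank H|_E = |E|$. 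By Proposition~\ref{prop:recov-rank}, $E$ is correctable for this particular code, hence for the MR GC.

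The main obstacle is the genericity bookkeeping. MR for the $(m,n,a,b,h)$ family is defined relative to codes of the specific form above, so I must ensure the witness code in the sufficiency direction has that form. I would handle this by noting that correctability of $E'$ in the $h=0$ family is witnessed by some $(A,B')$, and then reusing exactly that $(A,B')$ together with the explicit $B$. This makes a simultaneous choice unnecessary, and the existence of the MR GC itself then follows by the same Schwartz--Zippel argument as in Theorem~\ref{thm:mr-lrc-random}.
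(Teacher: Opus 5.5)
Your proof is correct. The survey gives this theorem only as a citation, so the relevant comparison is with the original argument of Holzbaur et al. The remark in the field-size subsection indicates how that argument goes. It uses a single explicit code: an MR tensor code over $\F_q$ together with $h$ global parities over $\F_{q^{(m-a)(n-b)}}$. It then shows that this one code corrects \emph{every} pattern $E$ that admits a suitable $E'$.

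You instead build a separate witness code for each pattern. You reuse the $(A,B')$ that witnesses $E'$, take $B|_{E'}=0$, and put an identity block on $D=E\setminus E'$. This is exactly the style of the survey's proof of Theorem~\ref{thm:MR-LRC}. It suffices because $\recov$ of a family is a union over its members. Necessity is the same rank-subadditivity argument in either approach.

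Your route is more elementary, and it stays over the field of the given witness. So the generic-$(A,B')$ detour at the start of your sufficiency paragraph is indeed unnecessary, as you note yourself. What the original route buys is a simultaneous statement. It proves the characterization and produces an MR GC with an explicit field-size bound in one stroke. Your approach needs a separate, non-constructive Schwartz--Zippel step just to know that MR GCs exist.

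One small point needs patching. When $|D|<h$, your $B$ has $h-|D|$ zero rows. The code you build then has dimension $(m-a)(n-b)-|D|$ rather than $(m-a)(n-b)-h$, so it is not literally a member of the family. This is harmless, because appending rows to $H$ cannot decrease $\rank H|_E$. You can fix it in either of two ways:
\begin{itemize}
\item Replace the zero rows by rows that raise $\rank H$ to $\rank H_0+h$. This is possible as long as $h\le (m-a)(n-b)$.
\item Alternatively, first pad $E'$ with positions outside $E$ to a basis of the full column space of $H_0$, and pad $D$ to size exactly $h$. This mirrors the maximal-pattern setup in the proof of Theorem~\ref{thm:MR-LRC}.
\end{itemize}
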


Thus, {to obtain} a description of the correctable patterns for MR GCs, it suffices to consider the case $h = 0$. These are known as $(m,n,a,b)$-\emph{MR tensor codes (TCs)}, and have been a frequent topic of interest in the literature. As the name suggests, a $[mn, (m-a)(n-b)]$-code $C$ which is an $(m,n,a,b)$-MR tensor code can be written\footnote{Here, we define the tensor product of two codes to be the code whose generator matrix is the Kroenecker product of the constituent generator matrices.} as $C_{\col} \otimes C_{\row}$, where $C_{\col}$ is a $[m,m-a]$-code and $C_{\row}$ is a $[n,n-b]$-code. In other words, $C_{\col}$ enforces the $a$ parity checks per column, and $C_{\row}$ enforces the $b$ parity checks per row. See Figure~\ref{fig:MRG}{.}

\begin{figure}
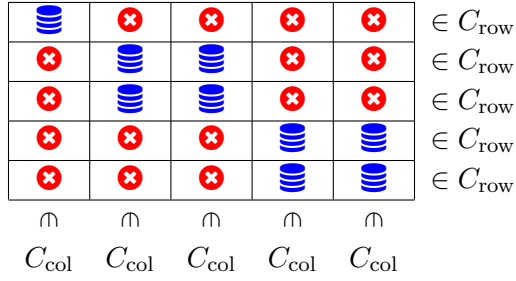

\def\arraystretch{1.1}
  \begin{center}
     \begin{tabular}{|c|c|c|c|c|c}\cline{1-5}
      \gd & \bd & \bd & \bd & \bd & $\in C_{\row}$\\\cline{1-5}
      \bd & \gd & \gd & \bd & \bd & $\in C_{\row}$\\\cline{1-5}
      \bd & \gd & \gd & \bd & \bd & $\in C_{\row}$\\\cline{1-5}
      \bd & \bd & \bd & \gd & \gd & $\in C_{\row}$\\\cline{1-5}
      \bd & \bd & \bd & \gd & \gd & $\in C_{\row}$\\\cline{1-5}
            \multicolumn{1}{c}{$\forkv$}
      & \multicolumn{1}{c}{$\forkv$}
        & \multicolumn{1}{c}{$\forkv$}
          & \multicolumn{1}{c}{$\forkv$}
            & \multicolumn{1}{c}{$\forkv$} &\\
      \multicolumn{1}{c}{$C_{\col}$}
      & \multicolumn{1}{c}{$C_{\col}$}
        & \multicolumn{1}{c}{$C_{\col}$}
          & \multicolumn{1}{c}{$C_{\col}$}
            & \multicolumn{1}{c}{$C_{\col}$} &
    \end{tabular}
  \end{center}
\caption{An illustration of a pattern which is not correctable in a $(5,5,2,2)$-MR tensor code as discovered by Holzbauer et al.~\cite{holzbaur2021correctable}. Each \bd\ represents an erased symbol.}\label{fig:MRG}
\end{figure}

\def\arraystretch{1.25}

\subsubsection{Case Study: \texorpdfstring{$a=b=1$}{a=b=1}}\label{subsec:a=b=1}

Before we describe some general results for MR tensor codes, we start with the simplest special case: where $a$ and $b$ are both $1$. In other words, every row and column of our $m \times n$ grid has a single parity check. Let's now investigate which patterns $E \subseteq [m] \times [n]$ are correctable. Fix a codeword $c \in \F_q^{m \times n}$ and assume the symbols $c|_{E}$ have been erased. As a key change in perspective, think of $E$ as the edges of a bipartite graph whose vertices are $V := [m] \sqcup [n]$, the disjoint union of $[m]$ and $[n]$. If a vertex of $(V,E)$ has its degree equal to $1$, then some row or column of the grid has a single error $e \in E$. In that case, we can use the parity check of the row or column to recover $e \in E$. As such, the recovery problem for $E$ is now reduced to the recovery problem for $E' := E \setminus \{e\}$.

By recursively applying this procedure, it suffices to study the recoverability of graphs $(V,E)$ such that each vertex has degree zero or degree at least $2$. In that case, either our graph has zero edges or has a cycle. The former case is trivially correctable, as no symbols were erased. However, if the graph has a cycle, then the pattern is not correctable. 

To see why, it suffices to exhibit a nonzero codeword $c$ whose support (i.e., nonzero coordinates) lies entirely in $E$. We can build such a codeword by alternately assigning $\pm 1$ along the edges of one such cycle and setting all other coordinates of the codeword to be $0$. Thus, any pattern $E$ that contains a cycle is not correctable. Unrolling our recursive argument, we have {proved} the following theorem. 

\begin{theorem}[e.g., \cite{Gopalan2016}]\label{thm:MR-grid}
A pattern $E \subseteq [m] \times [n]$ is correctable for an $(m,n,1,1)$-MR tensor code if and only if $E$ lacks a cycle.
\end{theorem}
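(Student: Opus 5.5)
The proof has two directions, and the argument sketched just before the statement essentially gives both. Throughout, view $E$ as the edge set of the bipartite graph on $V = [m] \sqcup [n]$. Since $E$ lives in a bipartite graph, any cycle in $E$ is a simple cycle of even length.

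\emph{Cycles are uncorrectable.} Suppose $E$ contains a cycle $e_1, e_2, \dots, e_{2\ell}$. Fix \emph{any} $(m,n,1,1)$ tensor code $C_{\col}\otimes C_{\row}$ and let $u \in \F_q^m$, $v \in \F_q^n$ be the single parity checks of $C_{\col}$ and $C_{\row}$.
\begin{itemize}
\item If some coordinate touched by the cycle carries a zero parity coefficient, say $v_j = 0$, then row recovery cannot see column $j$. In that case I would fall back to the rank criterion of Proposition~\ref{prop:recov-rank} and show dependence directly.
\item Otherwise, rescale the alternating $\pm1$ assignment. Put $c_{ij} = \pm (u_i v_j)^{-1}$ on the cycle edges, with signs alternating around the cycle, and $c = 0$ elsewhere.
\end{itemize}
In the second case each row and column visited by the cycle contains exactly two cycle edges, and they have opposite signs. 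So every weighted row and column sum vanishes, giving a nonzero codeword supported in $E$, and hence $E \notin \recov(C)$. Since this holds for every code in the family, $E \notin \recov(\mathcal F)$.

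For the first case there is a cleaner route. Using $\operatorname{rank}(H|_E) < |E|$ for $H = \begin{bmatrix} u^\top\otimes I_n \\ I_m \otimes v^\top\end{bmatrix}$, note that the rows of $H$ restricted to the cycle's vertices number $2\ell$. The cycle's $2\ell$ columns satisfy one linear relation, because the row-parity equations and the column-parity equations both sum to the same functional $\sum u_i v_j c_{ij}$. Hence the rank is at most $2\ell - 1 < 2\ell$. This counting argument handles all cases uniformly, so I would present it instead of the explicit codeword.

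\emph{Acyclic patterns are correctable.} Suppose $E$ is acyclic, so it is a forest. Take the code where $C_{\row}$ and $C_{\col}$ are the parity check codes, i.e.\ $u = 1^m$ and $v = 1^n$. Argue by induction on $|E|$.
\begin{itemize}
\item If $E = \emptyset$, there is nothing to recover.
\item Otherwise the forest has a leaf: a row or column containing exactly one erased entry $e$. The corresponding parity check recovers $e$ from known symbols.
\item Then $E \setminus \{e\}$ is still a forest, and the induction applies.
\end{itemize}
Hence $E \in \recov(C) \subseteq \recov(\mathcal F)$. Since the MR code satisfies $\recov = \recov(\mathcal F)$, this shows acyclic patterns are correctable for any MR code, and the first direction shows patterns with cycles are not.

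The only delicate point is the first direction: a cycle must be uncorrectable for every member of the family, including degenerate parity vectors. The dependence-counting argument above is the step I expect to need the most care, specifically checking that the shared global functional $\sum u_i v_j c_{ij}$ really gives a nontrivial relation among the rows of $H$.
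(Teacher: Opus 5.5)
Your proof is correct. Its acyclic direction is exactly the paper's argument: peel off degree-one vertices of the bipartite graph, using the tensor product of two parity-check codes.

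The difference is in the cycle direction. The paper just exhibits the alternating $\pm 1$ codeword on the cycle, which is a codeword of the parity-check tensor code. Strictly speaking, that only shows cycles are uncorrectable for \emph{that} code. Concluding that $\recov(\mathcal{F})$ contains no cycle needs the statement for every $C_{\col}\otimes C_{\row}$ in the family, and that is what you supply. Your rescaled codeword $\pm(u_iv_j)^{-1}$ covers codes whose parity coefficients on the cycle's vertices are nonzero. Your rank count covers everything uniformly. It is the linear-algebraic form of the paper's later pigeonhole argument for regularity: a cycle on $\ell$ rows and $\ell$ columns has $2\ell > \ell+\ell-1$ erased cells, so it violates $(1,1)$-regularity. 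The paper's route buys an explicit witness; yours buys a complete treatment of the whole family, including degenerate parity vectors.

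Two small repairs are needed. First, the identity $\sum_j v_j\,(\text{column-check row } j) = \sum_i u_i\,(\text{row-check row } i)$ is a relation among the \emph{rows} of $H|_F$ (with $F$ the cycle), not among its columns.

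Second, the nontriviality you worry about is settled by a trivial case split. Restricted to the $2\ell$ rows indexed by the cycle's vertices, the coefficients of this relation are the $v_j$ and $u_i$ of those vertices. If any of them is nonzero, then $\rank H|_F \le 2\ell-1$. If all vanish, then every entry of $H|_F$ is zero, since each entry is one of these $u_i$ or $v_j$. In either case the cycle's columns are dependent, as required.
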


Furthermore, our argument shows that the tensor product of any two parity check codes (over any field) is a $(m,n,1,1)$-MR tensor code. We now build on this case study to consider $b > 1$.

\subsubsection{The \texorpdfstring{$a=1$}{a=1} Case}

The more general setting of $a = 1$ and $b \ge 1$ was studied in the original paper on MR GCs by Gopalan et al. \cite{Gopalan2016}. {To lead up to this result, we} first understand the {limitations} of the $a=b=1$ analysis{.}

Recall that we can {view} a given pattern $E \subseteq [m] \times [n]$ as a bipartite graph. {When} $a=b=1$, we could WLOG assume that no vertex has degree equal to $1$. {For $a=1$ and general $b$, we can modify this trick as follows.} {If} any column of $E$ has $1$ symbol, we can immediately correct it. Likewise, if any of the $m$ rows of $E$ has $b$ or fewer symbols, we can immediately correct those. Thus, in the graph interpretation of $E$, if we think of $[m]$ as the ``left'' side and $[n]$ as the ``right'' side, then we can without loss of generality assume that the degree of each vertex on the right side is at least $2$, and the degree of each vertex on the left side is at least $b+1$.

With this reduction, one may hope like in the $a=b=1$ case that such patterns $E$ are either empty or non-recoverable, but this is not the case. For example, for $(m,n,a,b) = (3,{6},1,2)$, the following pattern is recoverable even though every column has at least $2 = a+1$ erasures and every row has at least $3 = b+1$ erasures.

\begin{center}
     \begin{tabular}{|c|c|c|c|c|c|}\cline{1-6}
      \bd & \bd & \bd & \gd & \gd & \gd\\\cline{1-6}
      \bd & \gd & \gd & \bd & \bd & \gd\\\cline{1-6}
      \gd & \bd & \gd & \bd & \gd & \bd\\\cline{1-6}
      \gd & \gd & \bd & \gd & \bd & \bd\\\cline{1-6}
    \end{tabular}
\end{center}

{The justification that such patterns are correctable requires much more subtle understanding of what makes $(C_{\col} \otimes C_{\row})|_{\bar{E}}$ have full dimension.} A key contribution by Gopalan et al. \cite{Gopalan2016} was describing these correctable patterns using a property which is known as \emph{regularity}.\footnote{As we shall discuss in Section~\ref{subsec:matroid}, Kalai, Nevo, and Novik~\cite{kalai2015Bipartite} independently discovered this property, which they called a \emph{Laman condition}, in the context of {bipartite rigidity}.}

\begin{definition}\label{def:reg}
A pattern $E \subseteq [m] \times [n]$ is $(a,b)$-regular if for all $S \subseteq [m]$ and $T \subseteq [n]$ with $|S| \ge a$ and $|T| \ge b$, {we have that} $|E \cap (S \times T)| \le a|T| + b|S| - ab$.
\end{definition}

It is straightforward to show for any choice of $a,b \ge 1$ that if $E \subseteq [m] \times [n]$ is recoverable in an $(m,n,a,b)$-tensor code, then $E$ is $(a,b)$-regular. The proof follows by an information-theoretic version of the pigeonhole principle. Assume that $E$ is recoverable but not regular. Thus, there is $S \subseteq [m]$ and $T \subseteq [n]$ with $|S| \ge a$ and $|T| \ge b$ but $|E \cap (S \times T)| > a|T| + b|S| - ab$. Subtracting $|S \times T|$ from both sides, we get that
\begin{align}
  |(S \times T) \setminus E| < (|S|-a)(|T|-b).\label{eq:ST-pigeon}
\end{align}
However, our code only has $a$ parity checks per column and $b$ parity checks per row, so any $|S| \times |T|$-sized box must have at least $(|S|-a)(|T|-b)$ symbols of information.
Thus, (\ref{eq:ST-pigeon}) implies that some information was destroyed, contradicting that $E$ is recoverable.

The other direction---that regularity implies recoverability---is much more difficult to establish. Proving this direction for $a=1$ and general $b$ is one of the main results in \cite{Gopalan2016}.
\begin{theorem}[\cite{Gopalan2016}]\label{thm:reg}
A pattern $E \subseteq [m] \times [n]$ is correctable for an $(m,n,1,b)$-MR tensor code if and only $E$ is $(1,b)$-regular.
\end{theorem}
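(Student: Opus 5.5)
The necessity of $(1,b)$-regularity was already shown above by the pigeonhole argument. So the plan is to prove sufficiency: for a $(1,b)$-regular $E$, exhibit a single code $C_{\col}\otimes C_{\row}$ (over a large enough field) for which $E$ is correctable. By monotonicity I will assume $E$ contains no row $i$ with $E_i := \{j : (i,j)\in E\}$ of size $\le b$. Such rows can be peeled off first with the row code, and regularity is inherited by subpatterns. I take $C_{\col}$ to be the parity check code, since every $[m,m-1]$ MDS code is one up to coordinate scaling. I take $C_{\row}$ to be the code with an $b\times n$ parity check matrix $H$ of indeterminates.

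\textbf{Step 1: reformulate as independence of subspaces.} A nonzero codeword supported in $E$ is a tuple $(x_1,\dots,x_m)$ with $x_i\in V_i:=\{x\in C_{\row} : \operatorname{supp}(x)\subseteq E_i\}$ and $\sum_i x_i = 0$, the latter being the column parity condition. So $E$ is correctable iff the sum $V_1+\cdots+V_m$ is direct. For generic $H$ the row code is MDS, so $\dim V_i = |E_i|-b$. Hence correctability is equivalent to
\[
\dim(V_1+\cdots+V_m) = \textstyle\sum_i (|E_i|-b).
\]
This is a rank condition on a matrix polynomial in the entries of $H$: stack bases of the $V_i$, which can be written via $b\times b$ minors of $H$ by Cramer's rule. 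It therefore suffices to show the corresponding determinant is a nonzero polynomial, exactly as in the proof sketch of Theorem~\ref{thm:mr-lrc-random}. Schwartz--Zippel and a union bound over patterns then give an MR tensor code over a large field.

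\textbf{Step 2: the matching necessary condition.} For any set $S$ of rows, with $T=\bigcup_{i\in S}E_i$, the subspaces $V_i$ for $i\in S$ all lie in $\{x\in C_{\row}:\operatorname{supp}(x)\subseteq T\}$, which has dimension $|T|-b$. Directness thus needs $\sum_{i\in S}(|E_i|-b)\le |T|-b$, i.e.\ $|E\cap(S\times T)|\le |T|+b|S|-b$. This is precisely $(1,b)$-regularity with this $S,T$. So regularity is the "Hall condition" for this generic subspace arrangement, and the task is a Hall/Rado-type theorem for generic codes.

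\textbf{Step 3 (main obstacle): sufficiency of the Hall condition for generic $H$.} I would argue by induction on $m$, specializing $H$ to show the polynomial is nonzero.

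In the \emph{non-tight} case, every proper nonempty $S$ satisfies the inequality of Step 2 strictly. Here I would add the last row's $V_m$ to the direct sum of the first $m-1$, which is nonzero for generic $H$ by induction. I would then show $V_m$ meets $W := V_1\oplus\cdots\oplus V_{m-1}$ trivially. The route is to degenerate one column $j\in E_m$ of $H$ that is shared with other rows (e.g.\ send it to a generic multiple of another column) and use slack to delete an element of $E_m$ while preserving regularity. This is an edge-removal induction on $|E|$.

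In the \emph{tight} case, some proper $S$ has equality. Then $\bigoplus_{i\in S}V_i$ equals the whole space $\{x\in C_{\row}:\operatorname{supp}(x)\subseteq T\}$. I would contract: quotient by it, which amounts to puncturing $C_{\row}$ to $[n]\setminus T$ and shortening. I would then check that the remaining rows with $E_i\setminus T$ form a regular pattern for a generic code of the same redundancy $b$. Submodularity of $|T|+b|S|-b$ minus the edge count makes tight sets closed under union, which is what this check should use.

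I expect the delicate part to be ensuring that genericity survives the degeneration and contraction steps, so that the induction hypothesis really applies to the specialized $H$.
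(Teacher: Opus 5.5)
\textbf{Steps 1--2 are fine; Step 3 has a genuine gap.} Correctability of $E$ is exactly directness of $V_1+\cdots+V_m$, and your Step 2 inequality is $(1,b)$-regularity. Step 3, however, carries the whole sufficiency direction, and it does not work as described.

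First, the tight/non-tight dichotomy is degenerate. For a singleton $S=\{i\}$ the inequality reads $|E_i|-b\le |E_i|-b$, so singletons are always tight, and for $m\ge 2$ the non-tight case never arises. That case is also not yet an argument: deleting an element of $E$ preserves regularity for free, and the real task, lifting directness from the smaller pattern back to $E$ after degenerating a column, is left unexplained.

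Second, and more seriously, the contraction is wrong. Suppose $\bigoplus_{i\in S}V_i=U_T:=\{x\in C_{\row}:\operatorname{supp}(x)\subseteq T\}$. Projection onto $[n]\setminus T$ identifies $C_{\row}/U_T$ with the puncturing $C_{\row}|_{[n]\setminus T}$. For MDS $C_{\row}$ and $|T|\ge b$ this is all of $\F^{[n]\setminus T}$, so it has redundancy $0$, not $b$. The image of $V_i$ for $i\notin S$ is $\{y\in\F^{E_i\setminus T} : H|_{E_i\setminus T}\,y \text{ lies in the column span of } H|_{E_i\cap T}\}$, whose redundancy $b-|E_i\cap T|$ depends on the row. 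So the quotient is not a generic redundancy-$b$ instance on the sets $E_i\setminus T$, and correctability of that instance does not transfer back to $E$.

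A concrete example: take $b=1$, $E_1=\{1,2,4\}$, $E_2=\{1,3,5\}$, $E_3=\{2,3,6\}$. Contracting the tight set $S=\{1\}$ leaves $\{3,5\},\{3,6\}$, which is regular and correctable for a generic redundancy-$1$ code. Yet $\sum_i \dim V_i=6>5=\dim C_{\row}$, so the original sum is not direct. Rescuing the induction would need a hypothesis with row-specific constraints coupled through the common $H$. That is precisely the genericity issue you flag and leave open.

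\textbf{The missing idea} is the one the paper attributes to Gopalan et al.: Hall's marriage theorem. It applies cleanly once you parametrize $C_{\row}$ by a generator matrix instead of by $H$. In the $H$-parametrization, the basis vectors of the $V_i$ have entries that are dependent $b\times b$ minors, so no matching argument is available.
\begin{enumerate}
\item Let $k=n-b$ and build a $k\times n$ matrix $G$. For each $i$, give it $|E_i|-b$ rows allowed to be nonzero only on $E_i$. Pad with unconstrained rows; regularity with $S=[m]$ gives $\sum_i(|E_i|-b)\le k$. Put independent indeterminates in all allowed positions.
\item Your Step 2 inequality is exactly the condition $|\bigcap_{t\in I}S_t|\le k-|I|$ of Theorem~\ref{thm:gm}. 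Here the zero sets are $S_t=[n]\setminus E_i$ for group-$i$ rows and $S_t=\emptyset$ for padding rows.
\item By Hall, every $k\times k$ submatrix of $G$ contains a perfect matching of allowed positions. Hence every maximal minor is a nonzero polynomial, and Schwartz--Zippel gives a specialization generating an MDS code $C_{\row}$.
\item The $|E_i|-b$ rows of group $i$ are independent codewords supported in $E_i$. Since $\dim V_i=|E_i|-b$, they form a basis of $V_i$.
\item All rows of $G$ are independent, so $V_1+\cdots+V_m$ is direct.
\end{enumerate}
This replaces all of Step 3, with no induction or degeneration needed.
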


The key idea behind the proof of Theorem~\ref{thm:reg} is a careful application of Hall's marriage theorem. We remark that an alternative proof of Theorem~\ref{thm:reg} was established by Brakensiek, Gopi, and Makam using the theory of higher order MDS codes~\cite{bgm2021mds,bgm2022}, which are explained in more detail in Section~\ref{subsec:homds}. They also give multiple polynomial-time algorithms for checking whether a given pattern $E \subseteq [m] \times [n]$ is $(1,b)$-regular and thus recoverable in an $(m,n,1,b)$-MR tensor code.

\subsubsection{The \texorpdfstring{$a \ge 2$}{a >= 2} Case}

One might hope that Theorem~\ref{thm:reg} extends for general $(a,b)$. Such a statement was conjectured by Gopalan et al.~(and even partially verified~\cite{shivakrishna2018Maximally}) but such hopes turned out to be false. Holzbauer, Puchinger, Yaakobi, and Wachter-Zeh~\cite{holzbaur2021correctable} showed that for the parameters $(m,n,a,b) = (5,5,2,2)$, there exists a pattern $E \subseteq [5] \times [5]$ which is $(2,2)$-regular but not recoverable. See Figure~\ref{fig:MRG} for a description of the pattern.

As such, even for $a=b=2$, the problem of characterizing the correctable patterns is vastly more challenging. A purely combinatorial description for $a=b=2$ is given by Bernstein~\cite{bernstein2017completion}, although its applicability to MR tensor codes was only recently observed by Brakensiek, Dhar, Gao, Gopi, and Larson~\cite{brakensiek2024rigidity}. See Section~\ref{subsec:matroid} for more details on Bernstein's original motivation.

\begin{theorem}[\cite{bernstein2017completion} as stated in \cite{brakensiek2024rigidity}]\label{thm:bernstein}
A pattern $E \subseteq [m] \times [n]$ is correctable in an $(m,n,2,2)$-MR tensor code if and only if it has a two-coloring $\sigma : E \to \{1,2\}$ of the edges with the following properties.
\begin{itemize}
\item No cycle of $E$ is monochromatic (in the sense that every edge is assigned the same color).
\item No cycle of $E$ is alternating (in the sense that every vertex has an edge colored $1$ and an edge colored $2$).
\end{itemize}
\end{theorem}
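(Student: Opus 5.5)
The plan is to prove the statement in two stages. First, identify the correctable patterns of a generic $(m,n,2,2)$ tensor code with the independent sets of the \emph{rank-$2$ completion (bipartite rigidity) matroid}. Second, bring in the combinatorial description of that matroid due to Bernstein. Since the MR code realizes the union of recoverable sets over the family, a pattern $E$ is correctable iff $H|_E$ has full column rank for a generic choice of the column checks $A\in\F^{2\times m}$ and row checks $B\in\F^{2\times n}$. We may work over an algebraically closed (or sufficiently large) field, by the same Schwartz--Zippel reasoning as in Theorem~\ref{thm:mr-lrc-random}.

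\textbf{Step 1: correctability as rank-$2$ infinitesimal rigidity.} Let $a_i\in\F^2$ be the $i$-th column of $A$ and $b_j\in\F^2$ the $j$-th column of $B$. The dual of $C_{\col}\otimes C_{\row}$ is spanned by the rows of $A\otimes I_n$ and $I_m\otimes B$. Hence the column of $H$ indexed by $(i,j)$ is the linear functional
\[
(Y,Z)\;\mapsto\; \langle a_i, y_j\rangle + \langle z_i, b_j\rangle
\]
on pairs $Y=(y_1,\dots,y_n)$, $Z=(z_1,\dots,z_m)$ with $y_j,z_i\in\F^2$. This is exactly the differential at $(U,V)=(A^T,B^T)$ of the map $(U,V)\mapsto (\langle u_i,v_j\rangle)_{(i,j)\in E}$, which parametrizes the entries in $E$ of a rank-$\le 2$ matrix $UV^T$. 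By Proposition~\ref{prop:recov-rank}, $E$ is correctable iff the rows of this Jacobian indexed by $E$ are linearly independent at a generic point. That is, $E$ is independent in the algebraic matroid of the determinantal variety of $m\times n$ matrices of rank $\le 2$, which is equivalently the rank-$2$ matrix completion matroid. I will also sanity-check this identification against Definition~\ref{def:reg}: the variety has dimension $2m+2n-4$, matching the regularity bound.

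\textbf{Step 2: characterize that matroid combinatorially.} I would follow Bernstein's tropical route. The algebraic matroid of a variety coincides with that of its tropicalization. The tropicalization of the rank-$\le 2$ determinantal variety is the space of tropical rank-$2$ matrices, which is described by phylogenetic-tree data: each entry is a tropical sum of two terms, so generically each $(i,j)$ records \emph{which} of the two terms attains the minimum. This choice is the $2$-coloring $\sigma$.

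Independence of $E$ then becomes the existence of a coloring for which the coordinate projection of the corresponding polyhedral cone is full-dimensional on $E$. On each color class the cone is a difference of row and column potentials, $x_{ij}=u^{(c)}_i+v^{(c)}_j$. A monochromatic cycle gives a linear dependence, namely the alternating sum around the cycle vanishes. An alternating cycle gives a dependence mixing both colors, since each vertex switches potentials along the cycle. Conversely, if neither kind of cycle exists, I would build the required independence by an ear/forest decomposition argument, choosing potentials generically.

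The main obstacle is this converse direction: showing that the absence of monochromatic and alternating cycles forces full rank on $E$, and also that some cone of the tropical variety actually realizes the coloring. I would attempt it by induction on $|E|$, removing a vertex of low degree and extending a valid coloring (in the style of Henneberg moves), while tracking the rank via the explicit linear structure of the cone.
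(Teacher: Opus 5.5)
Your Step~1 is fine in characteristic zero, where the generic Jacobian matroid equals the algebraic matroid; it is the identification behind Theorem~\ref{thm:rigid}. The genuine gap is in Step~2: it rests on a false description of the tropical variety, and that description quietly assumes the hardest part of the theorem.

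The tropicalization of the rank-$\le 2$ determinantal variety is the coordinate projection of the tropical Grassmannian of $2$-planes in $(m+n)$-space. Its cones are indexed by \emph{all} phylogenetic trees $T$ on $[m]\sqcup[n]$, with $x_{ij}$ given (up to sign) by the length of the $i$--$j$ path in $T$. Matrices whose entries are a tropical sum of two terms, $x_{ij}=\min(u_i+v_j,\,u'_i+v'_j)$, form only the image of the tropicalized parametrization (Barvinok rank $\le 2$). Putting $\alpha_i=u_i-u'_i$ and $\beta_j=v'_j-v_j$ gives $x_{ij}=u'_i+v_j+\min(\alpha_i,\beta_j)$. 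So these matrices correspond exactly to \emph{caterpillar} trees, and your ``which term attains the minimum'' coloring is $\sigma(i,j)=1$ iff $\alpha_i<\beta_j$, the coloring induced by a linear order $\pi$ of the vertices.

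This is a proper subset of the tropical variety, because Kapranov rank $2$ does not imply Barvinok rank $2$. For example, take a $3\times 3$ matrix whose columns lie on the three legs of a tropical line. Hence Yu's theorem only gives you that $E$ is independent iff some tree $T$ has path matrix $Q$ of rank $|E|$. Passing from an arbitrary such $T$ to a caterpillar is Bernstein's caterpillar reduction, the heart of the ``only if'' direction in the paper. Your sentence ``independence of $E$ then becomes the existence of a coloring for which the \dots\ cone is full-dimensional'' assumes this reduction, and nothing in your plan proves it.

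You also misread the role of alternating cycles, and with it you put the difficulty in the wrong direction. With potentials $x_{ij}=u^{(c)}_i+v^{(c)}_j$, an alternating cycle gives \emph{no} linear dependence, because each potential $u^{(c)}_i$ or $v^{(c)}_j$ occurs on exactly one edge of the cycle. The condition is about realizability instead. Orient color-$1$ edges $i\to j$ and color-$2$ edges $j\to i$; a directed cycle is then precisely an alternating cycle. So $\sigma$ has no alternating cycle iff it is induced by some order $\pi$ (a topological sort), i.e., iff it labels an actual caterpillar cone.

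Once you see this, the direction you call the main obstacle takes a few lines and needs no Henneberg-type induction. The two color classes are forests in the disjoint variable sets $(u,v)$ and $(u',v')$, so the linear map on that cone has rank $|E|$. That argument, however, is valid only in characteristic zero, since it goes through the algebraic matroid and tropicalization. The ``only if'' direction transfers to every characteristic by lifting $\det(H|_E)$ to $\mathbb{Z}$. The ``if'' direction in positive characteristic needs something like the paper's argument: orienting each monochromatic tree away from a root selects a monomial of $\det(H|_E)$ that occurs exactly once, hence with coefficient $\pm 1$.
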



Compared to the combinatorial methods of Theorem~\ref{thm:reg}, Theorem~\ref{thm:bernstein} is proved using machinery of \emph{matroid theory} and \emph{tropical geometry}. As remarked by \cite{brakensiek2024rigidity}, finding a polynomial-time description of Theorem~\ref{thm:bernstein} is open, although a conditional algorithm was recently provided by Brakensiek, Chen, Dhar, and Zhang~\cite{brakensiek2025random}; see Section~\ref{subsec:rigid} for further discussion. For other recoverability results, such as descriptions of recoverable patterns when $m$ is bounded in addition to $a$ and $b$, see \cite{brakensiek2024rigidity} and the references therein.

\begin{proof}[Sketch of Proof of Theorem~\ref{thm:bernstein}]
For simplicity of this sketch, we assume that $E$ has size $2m+2n-4$, so that by Proposition~\ref{prop:recov-rank} the condition that $E$ is correctable is equivalent to $\det(H|_{E}) \neq 0$, where $H \in \F^{(2m+2n-4) \times mn}$ is the parity check matrix of an $(m, n, 2, 2)$-MR tensor code.

\paragraph{``If'' direction.} We first prove that if $E$ has a two coloring $\sigma : E \to \{1,2\}$ with the prescribed properties (which we henceforth call a ``good'' coloring), then $\det(H|_{E}) \neq 0$. For the special case in which $\F$ is the complex field (i.e., a field of characteristic zero), this direction was proved by Bernstein~\cite{bernstein2017completion} using tropical geometry. However, we present a more elementary argument due to Brakensiek, Dhar, Gao, Gopi, and Larson~\cite{brakensiek2024rigidity} which applies to every field characteristic.

The first step is to express the parity check matrix in terms of the parity check matrices of the underlying $[m, m-2]$ and $[n, n-2]$ codes. More precisely, assume the parity check matrix of $C_{\col}$ and $C_{row}$ are
\[
\begin{pmatrix}
x_{1,1} & \cdots & x_{m, 1}\\
x_{1,2} & \cdots & x_{m, 2}
\end{pmatrix}\text{ and }\begin{pmatrix}
y_{1,1} & \cdots & y_{n, 1}\\
y_{1,2} & \cdots & y_{n, 2}
\end{pmatrix},
\]
respectively, where we think of these entries as variables over a polynomial ring. Then, for each $(i, j) \in [m] \times [n]$, the $(i,j)$th column of $H$ has four\footnote{An astute reader will notice that this description of $H$ involves $2m+2n$ rows rather than $2m+2n-4$. Four of these rows need to be deleted to get a true parity check matrix, but we omit such technical details from this sketch.} nonzero entries:
\[
  H_{2i-1, (i,j)} = y_{j,1}, H_{2i, (i,j)} = y_{j,2}, H_{2m+2j-1, (i,j)} = x_{i,1}, H_{2m+2j, (i,j)} = x_{i,2}.
\]
As such, when we compute $\det(H|_{E})$ in terms of these symbolic variables, each monomial in the determinant expansion requires picking one of these four variables $\{x_{i,1}, x_{i,2}, y_{j,1}, y_{j,2}\}$ for each $(i,j) \in E$, subject to the constraint that each row has exactly one monomial selected. Using our good two-coloring $\sigma : E \to \{1,2\}$, we select our monomials using the following procedure.

\begin{itemize}
\item First, let $T_1, \hdots, T_\ell$ monochromatic components of $E$ with respect to $\sigma$. Since $E$ has no monochromatic cycle, each $T_k$ is a tree.
\item Assign an arbitrary root $r_k \in V(T_k)$ for each tree and pick an orientation of the edges of $T_k$ such that all edges are oriented away from $r_k$ (like in a breadth-first search).
\item For each $(i,j) \in E$, we select the monomial corresponding to $(i,j)$ as follows.
\begin{itemize}
  \item If the edge is oriented $i \to j$, pick $x_{i,\sigma(i,j)}$.
  \item If the edge is oriented $j \to i$, pick $y_{j,\sigma(i,j)}$. 
\end{itemize}
\end{itemize}
The product of these monomials, which we call $M$, appears in the determinant expansion of $\det(H|_{E})$.  Conversely, any monomial of $\det(H|_{E})$ can be viewed an oriented two-coloring of $E$. However, with a careful analysis, one can show $M$ is special in the sense that $M$ appears exactly once in the determinant expansion of $\det(H|_{E})$ and thus cannot cancel out, so $\det(H|_{E}) \neq 0$. 

To prove this, assume for contradiction that $M$ appears a second time in the expansion. That means there is a second oriented two-coloring of $E$ yielding this monomial. By carefully tracking how the monomials correspond to the structure of the graph, one can deduce this is only possible if the second two-coloring reverses some alternating cycles of $E$. However, our hypothesis about $E$ states that it lacks alternating cycles, thus $M$ is indeed unique.

\paragraph{``Only if'' direction.} A first question one might ask is if the combinatorial proof we just gave can prove the converse. The answer is no, although we show that our special monomial does not cancel out if $E$ has a special coloring, we do not know (at least currently) a direct proof that the monomials corresponding to `bad' colorings of $E$ always cancel out.

As such, we present Bernstein's proof using tropical geometry~\cite{bernstein2017completion}. We again assume that $E \subseteq [m] \times [n]$ has size $2m+2n-4$. Like in the proof of the ``if'' direction, we think of $\det(H|_{E})$ as an arithmetic circuit using only symbolic variables $\{x_{i,c}, y_{j,c}\}$  and the arithmetic operators $\{+, -, \times\}$. Thus, the circuit being zero in characteristic zero implies it is zero in all characteristics. In other words, we may without loss of generality think of $\det(H|_{E})$ as a symbolic polynomial over $\mathbb C$.

In tropical geometry, we replace this arithmetic circuit which with its \emph{tropicalization}, where (in general) the $+/-$ operators become the $\max$ operator, and the $\times$ operator becomes the $+$ operator.  With such a transformation, a complicated algebraic function turns into a system of (partial) hyperplanes stitched together. We recommend the textbook of Maclagan and Sturmfels for an introduction to the field~\cite{maclagan2015introduction}. A crucial property of this tropicalization is that it preserves the underlying matroid structure of $H$~\cite{yu2017algebraic}, that is for every $E \subseteq [m] \times [n]$, whether the columns of $H|_{E}$ are linearly independent is captured in the structure of the tropicalization.

Using a number of tools in tropical geometry, one can recast the structure of MR tensor codes into the structure of a two-dimensional Grassmannian, more precisely the set of $2 \times 2$ determinants of the  $2 \times (m+n)$ matrix
\[
\begin{pmatrix}
x_{1,1} & \cdots & x_{m, 1} & y_{1,1} & \cdots & y_{n, 1}\\
x_{1,2} & \cdots & x_{m, 2} & y_{1,2} & \cdots & y_{n, 2}
\end{pmatrix}.
\]
Crucially for us, the tropicalization of the two-dimensional Grassmannian is very well understood~\cite{maclagan2015introduction}. In particular, the primary structure of the Grassmannian corresponds to the structure of \emph{phylogenetic trees} with $m+n$ leaves. That is, trees with $m+n$ and $m+n-2$ internal nodes each of degree $3$, so $2m+2n-3$ edges total. Using this property of the tropical Grassmannian, Bernstein shows that a pattern $E \subseteq [m] \times [n]$ is correctable if and only if there exists a phylogenetic tree $T$ with $m+n$ leaves (indexed by $[m] \sqcup [n]$) with the following property:
\begin{itemize}
\item Let $Q$ be a $|E| \times |T|$ matrix such that for every edge $(i,j) \in E$ and every edge $e \in T$, we have that $Q_{(i,j), e} = 1$ if $e$ lies on the unique path from $i$ to $j$ in $T$ and $Q_{(i,j), e} = 0$ otherwise. Then, $E$ is correctable if and only if the rank\footnote{Since $Q$ has $|T|=2m+2n-3$ columns, one might think this implies there are correctable patterns with $2m+2n-3$ leaves. However, a rank of $2m+2n-3$ is only possible of $E$ has at least one edge which connects two vertices of $[m]$ or two vertices of $[n]$. This does not make sense in the context of tensor codes, but this does make sense in the context of the more general ``skew-symmetric matrix completion'' problem which Bernstein studies. See \cite{bernstein2017completion,brakensiek2024rigidity} for more details.} of $Q$ is $|E|$.
\end{itemize}
Although the above property is combinatorial, it is not quite enough to deduce the prescribed coloring of $E$ when $\det(H|_{E}) \neq 0$. Bernstein performs another trick in which he shows the above fact does not merely hold for some tree $T$, but rather $T$ can be a more structured \emph{caterpillar tree} where the $m+n-2$ internal nodes all lie on the same path. Walking along these $m+n-2$ internal nodes, we induce an ordering on the $m+n$ leaves (with an inconsequential ambiguity on how to order the first two and last two leaves). Effectively, this caterpillar tree $T$ induces a permutation $\pi$ on the vertex set $[m] \sqcup [n]$. 
We now color $E$ as follows: if $(i,j)$ in $E$ has $\pi(i) < \pi(j)$, set $\sigma(i,j) = 1$. Otherwise, set $\sigma(i,j) = 2$. Bernstein shows that since $\rank Q = |E|$ and $T$ is a caterpillar tree, this two coloring has no monochromatic cycle and no alternating cycle, as desired.
\end{proof}

\subsection{Field Size Bounds and Explicit Constructions}\label{subsec:field-size}

Now that we have described the correctable patterns of MR tensor and GCs, we next turn to a more practical question: over what field sizes $q$ do these $(m,n,a,b,h)$ MR GCs exist? And, can we construct such codes explicitly? For simplicity of discussion, we assume that $a$ and $b$ are constant.

\subsubsection{MR Tensor Codes} First we look at the case in which $h=0$, i.e., the setting of MR tensor codes. Recall by our discussion in Section~\ref{subsec:a=b=1}, that the case in which $a=b=1$ can realized by the tensor product of two parity checks. As such, $(m,n,1,1)$-MR tensor codes can be explicitly constructed over \emph{every} field, with the smallest being $\F_2$.

However, this clean picture is unique to $a=b=1$, as Kong, Ma, and Ge~\cite{kong2021new} showed that every $(m,n,1,2)$-MR tensor code with $m \ge 4$ requires $q = \Omega(n^2)$. Further, Brakensiek, Dhar, and Gopi~\cite{bdg2024size} showed that for $m \ge 3$, any $(m,n,1,b)$-MR tensor code requires field size at least $\Omega(n^{b-1})$.

More results are known in terms of upper bound for general $(m,n,a,b)$-MR tensor codes. Such bounds are typically proved using techniques similar to the proof of Theorem~\ref{thm:mr-lrc-random}. More precisely, one shows that $C_{\col} \otimes C_{\row}$ being a $(m,n,a,b)$-MR tensor code is equivalent to $f(m,n,a,b)$ nonzero polynomials of degree at most $d$ having a nonzero evaluation at certain entries of the generator (or parity check) matrices of $C_{\col}$ and $C_{\row}$ , where each polynomial typically corresponds to the correctability of some pattern $E \subseteq [m] \times [n]$. By the Schwarz-Zippel lemma, we can then set $q \approx d \cdot f(m,n,a,b)$. Using such an argument, Kong, Ma, and Ge~\cite{kong2021new} proved that when $n \gg m$, $(m,n,1,b)$-MR tensor codes exist over field size $\approx n^{b(m-1)}$. Brakensiek, Gopi, and Makam~\cite{bgm2021mds} generalized this by showing for $n \gg m$, $(m,n,a,b)$-MR tensor codes exist over field sizes $\approx n^{b(m-a)}$. We also remark that Athi, Chigullapally, Krishnan, and Lalitha~\cite{athi2023Structurea} have improved upper bounds in the regime for which {$n-b$ is much larger than $m$, where $q \approx 2^{m^2(n-b)}$ suffices.}

Only a few explicit constructions are known. In the case in which $m$ and $n$ are both growing, the best explicit constructions have size doubly exponential in $m$ and $n$ (e.g., \cite{shangguan2020combinatorial,roth2021higher,bdg2024size}). However, when $m=3$ and $b \le 5$, some $n^{O(1)}$-sized explicit constructions are known. For a more detailed overview of known field size bound for MR tensor codes and the closely related higher order MDS codes (see Section~\ref{subsec:homds}), please refer to \cite{bdg2024size}.

\subsubsection{MR Grid Codes}

Much less is known for field size bounds in the setting $h \ge 1$. Holzbauer, Puchinger, Yaakobi, and Wachter-Zeh~\cite{holzbaur2021correctable} show that if an $(m,n,a,b)$-MR tensor code exists over field size $q$, then the code can be transformed into a $(m,n,a,b,h)$-MR GC over field size $q^{(m-a)(n-b)}$ for \emph{any} $h \ge 1$.

We remark this exponential increase in field size is actually necessary in the case $a=b=1$. More precisely, Kane, Lovett, and Rao~\cite{kane2019independence} prove the following result.

\begin{theorem}\label{thm:KLR}
If $C \in \F_{q}^{n\times n}$ is a $(n,n,1,1,1)$-MR GC then $q \ge \Omega(2^{n/2})$.
\end{theorem}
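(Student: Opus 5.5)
We reduce the statement to an additive-combinatorics question about weights on the complete bipartite graph $K_{n,n}$, and then use pigeonhole over exponentially many cycles.

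\textbf{Step 1: normalize the code.} Write $C = \{c \in \F_q^{n\times n} : c \in C_{\col}\otimes C_{\row},\ \sum_{i,j} w_{ij}c_{ij}=0\}$ for some global weights $w_{ij}$. Every single erasure must be correctable, so every coordinate of the parity checks of $C_{\col}$ and $C_{\row}$ is nonzero. Rescaling rows and columns (which does not change $\recov$ or $q$) lets us assume both are parity check codes.

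\textbf{Step 2: translate MR into a condition on cycles.} By Theorem~\ref{thm:no-h} together with Theorem~\ref{thm:MR-grid}, the recoverable patterns of an $(n,n,1,1,1)$-MR GC are exactly the edge sets $E$ such that deleting one edge leaves a forest. In particular, every simple cycle $Z$ of $K_{n,n}$ must be recoverable. The kernel of $H|_Z$ without the global check is spanned by the alternating $\pm1$ codeword $\chi_Z$ on $Z$, as in Section~\ref{subsec:a=b=1}. Hence $C$ is MR if and only if, for every simple cycle $Z$,
\[
w(Z) := \sum_{t} (-1)^t w_{e_t} \neq 0,
\]
where $e_1,e_2,\dots$ traverse $Z$ in order. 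So it suffices to exhibit a family of $2^{\Omega(n)}$ simple cycles with the following property: for any two members $Z_x \neq Z_y$, the difference $w(Z_x)-w(Z_y)$, or a suitable signed combination of such values, equals $\pm w(Z')$ for some genuine simple cycle $Z'$. Then pigeonhole in $\F_q$ forces $q \ge 2^{\Omega(n)}$.

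\textbf{Step 3: a ladder gadget.} Partition the $n$ rows and $n$ columns into $k \approx n/2$ blocks. Give block $i$ two rows and two columns, forming a $4$-cycle $Q_i$ with alternating value $\delta_i = w(Q_i)$. Thread these blocks along a long path so that inside each block the path can take one of two internally disjoint routes. The two routes differ exactly by traversing $Q_i$, so switching route $i$ changes the alternating sum by $\pm\delta_i$. For $x\in\{0,1\}^k$ this defines a path $P_x$ between fixed endpoints, with $w(P_x) = w_0 + \sum_i \epsilon_i(x)\,x_i\,\delta_i$. If $q < 2^k$, two words $x\neq y$ collide, which yields a nontrivial signed relation $\sum_{i\in D}\pm\delta_i = 0$ with $D = \{i : x_i\neq y_i\}$.

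\textbf{Step 4 (main obstacle): realize the relation as one simple cycle.} We must now build a single simple cycle $Z'$ with $w(Z') = \pm\sum_{i\in D}\pm\delta_i$ and the same signs. The naive union $P_x\cup P_y$ is a union of disjoint $4$-cycles, not one cycle, so it does not contradict MR. My first attempt would redesign the gadgets with a spare ``return'' row/column pair per block, in the spirit of the crossed versus straight traversal of a ladder. The goal is that for every $D$ there is a simple cycle visiting exactly the blocks in $D$, each in a chosen orientation, with connecting edges whose contributions cancel in pairs. If the orientations cannot be freely matched to the signs $\epsilon_i$, I would instead pigeonhole only over words with a fixed sign pattern, for example enforcing that differing blocks alternate orientation. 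This costs at most a constant factor in the exponent, which is consistent with the $2^{n/2}$ bound.
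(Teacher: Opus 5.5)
Your Steps 1--2 are fine; they give the same ``valid labeling'' condition the paper uses, namely that every simple cycle of $K_{n,n}$ has nonzero alternating sum. The proof fails at the step you flag, and the proposed repair cannot work.

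A collision $w(P_x)=w(P_y)$ yields $\sum_{i\in D}\pm\delta_i=0$. This linear form is supported on the $|D|$ vertex-disjoint $4$-cycles $Q_i$ with coefficients $\pm1$. For any simple cycle $Z'$, the form $w(Z')$ has coefficient $\pm1$ on exactly the edges of $Z'$. So an identity $w(Z')=\pm\sum_{i\in D}\pm\delta_i$, valid for an arbitrary labeling, would force $Z'=\bigcup_{i\in D}Q_i$, which is not a simple cycle once $|D|\ge 2$. Each connecting edge of a simple cycle occurs exactly once, so it cannot ``cancel in pairs'' against anything. Fixing the sign pattern does not touch this.

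Worse, for $|D|\ge2$ the relation is compatible with maximal recoverability. The constraints $\delta_1=\cdots=\delta_k$ cut out a subspace not contained in any hyperplane $w(Z)=0$, so over a large field of characteristic $2$ there are valid labelings with all $\delta_i$ equal to some $\delta$. For these, $w(P_x)=w_0+(|x| \bmod 2)\,\delta$ takes only two values. Your ladder family only forces $x\mapsto w(P_x)$ to be a proper coloring of the hypercube $\{0,1\}^k$, which is bipartite, so it certifies nothing beyond $q\ge 2$.

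The general obstruction is this. A collision-based pigeonhole argument needs a family in which every pair differs by exactly one simple cycle. That is a clique in the Birkhoff polytope graph $\mathcal{B}_n$ (or its analogue for paths), and the argument then proves only that $q$ is at least the size of that clique. You do not exhibit an exponential-size family of this kind, and the ladder idea gives none. Such a clique would make the theorem a one-line pigeonhole argument.

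The paper's proof does not look for cliques. The map $\pi\mapsto\sum_i\gamma(i,\pi(i))$ turns a valid labeling into a proper $q$-coloring of $\mathcal{B}_n$, so some color class is an independent set of size at least $n!/q$. The real content is the bound $\alpha(\mathcal{B}_n)\le n!/2^{n/2}$. It comes from a dichotomy:
\begin{itemize}
\item If $A$ fails a pseudorandomness test, one gets a denser independent set in $S_{n-m}$.
\item Otherwise, representation theory of $S_n$ shows $\langle\varphi_A,\psi_n\rangle>0$ whenever $|A|>n!/2^{n/2}$, using only hook characters because $\psi_n$ is supported on $n$-cycles. This contradicts independence.
\end{itemize}
That independence-number bound is the missing ingredient. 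A collision argument could only replace it if you produced cliques of exponential size in $\mathcal{B}_n$.
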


They further show that one can pick $q = O(8^n)$ for infinitely many $n$ (and the construction is explicit).  A follow-up paper by Coregliano and Jeronimo~\cite{coregliano2022tighter} improved the lower bound to $q \ge \Omega(1.97^{n})$. {See also a very recent study by Brakensiek, Dhar, and Gopi~\cite{brakensiek2025improved} of the setting where $m$ is much smaller than $n$.}

\begin{proof}[Sketch of Proof of Theorem~\ref{thm:KLR}]
An essential idea by Kane, Lovett, and Rao is that constructing an $(n,n,1,1,1)$-MR grid code is equivalent to finding a coloring of the \emph{Birkhoff polytope graph} $\mathcal B_n$ whose vertices are the elements of the symmetric group $S_n$, i.e., the family of all $n!$ permutations of $n$. Two permutations $\pi, \tau \in S_n$ are connected by an edge if $\sigma := \pi^{-1} \circ \tau$ is a cycle permutation, i.e., there are $i_1, \hdots, i_m \in [n]$ such that $\sigma$ maps $i_1 \to i_2 \to \cdots \to i_m \to i_1$ and all other indices are fixed by $\sigma$.

To see why this is the case, recall that a $(n,n,1,1,1)$-MR grid code has exactly one ``global'' parity check. This parity check can be interpreted as a labeling $\gamma : [n] \times [n] \to \F_q$, where $\gamma(i,j) \in \F_q$ is the $(i,j)$th entry of the global parity check vector. Using Theorem~\ref{thm:MR-grid}, one can show that $\gamma$ is indeed the parity check of a $(n,n,1,1,1)$-MR grid code if and only if for every (simple) cycle graph $E \subseteq [n] \times [n]$, the alternating sum of the terms $\gamma(i,j)$ for edges $(i,j) \in E$ is nonzero. Henceforth, we call such a labeling $\gamma$ \emph{valid}.

 Kane, Lovett, and Rao prove in Claim~1.5 of \cite{kane2019independence} that given a valid $\gamma$, $\mathcal B_n$ has a valid $q$-coloring via the map $\pi \mapsto \sum_{i=1}^n \gamma(i, \pi(i))$. In particular, by the pigeonhole principle, the existence of a valid $\gamma$ implies the existence of an independent set of $\mathcal B_n$ of size $n! / q$.

Thus, to lower bound $q$, it suffices to upper bound the size of the largest independent set of $\mathcal B_n$. Let $A \subseteq S_n$ be a maximum-sized independent set. The high-level strategy used by Kane, Lovett, and Rao is to consider the following pseudorandomness test:
\begin{itemize}
\item Given even $m \le n$ and distinct $i_1, \hdots, i_m \in [n]$ and distinct $j_1, \hdots, j_m \in [n]$, how likely is it for a random $\pi \in A$ to map $\pi(i_a) = j_a$ for all $a \in [m]$?
\end{itemize}
If there is a choice of $m$ and indices such that the answer is larger than $\frac{2^{m/2} \cdot m!}{n!}$, then the set $A$ fails the pseudorandomness test, and the authors find an independent set $A' \subseteq S_{n-m}$ which is denser than it should be. Otherwise, $A$ looks pseudorandom, and the authors bound the size of $A$ using the crucial tool of \emph{representation theory}. One can define a group ring $\mathbb R[S_n]$ consisting of \emph{class functions} of the form $\varphi = \sum_{\pi \in S_n} a_\pi \pi$, where each $a_{\pi} \in \mathbb R$. To test whether $A \subseteq S_n$ is an independent set, Kane, Lovett and Rao construct two class functions
\begin{align*}
\varphi_A &:= \frac{1}{|S_n|} \frac{1}{|A|^2} \sum_{\substack{\sigma \in S_n\\\pi, \pi' \in A}} \sigma \pi (\pi')^{-1} \sigma^{-1}\\
\psi_n &:= \frac{1}{|C_n|} \sum_{\tau \in C_n} \tau,
\end{align*}
where $C_n$ is the set of cycles of $S_n$ of length $n$. Crucially, one can show that $A$ being an independent set implies\footnote{This condition only checks for edges corresponding to full-length cycles, so it is not equivalent to being an independent set.} the inner product $\langle \varphi_A, \psi_n\rangle = 0$. To understand this inner product, one can decompose the $\varphi_A$ and $\psi_n$ into \emph{Specht modules} (which can be thought of as a non-Abelian analogue of a Fourier basis), with the coefficients of these Spech modules being the \emph{characters}. By standard results in the representation theory of the symmetric group, these characters correspond to the \emph{Young tableaux} of size $n$. Furthermore, because $\psi_n$ is only a sum of full-length cycles, the only characters that need to be considered are the much smaller set of \emph{hook tableaux}.  Because of the pseudorandomness property assumed for $A$, the hook tableaux characters of $\varphi_A$ can be computed to suitable precision which enables showing that whenever $|A| > n!/2^{n/2}$, one must have $\langle \varphi_A, \psi_n\rangle > 0$, a contradiction to the fact that $A$ is an independent set. Therefore every independent set in $\mathcal{B}_n$ must have size at most $n!/2^{n/2}$.
\end{proof}

The follow-up work by Coregliano and Jeronimo~\cite{coregliano2022tighter} builds on the techniques of Kane--Lovett--Rao by introducing some new ideas. Instead of just considering $\psi_n$ (i.e., cycles of length $n$), Coregliano and Jeronimo consider others class functions corresponding to cycles of length $n - \ell$ for some parameter $\ell \ge 0$. This makes the representation theory considerably more complicated as a richer family of characters need to be considered. However, together with the pseudorandomness property for $A$, one can also impose many more constraints on the characters of $\varphi_A$. In general, such constraints are difficult to analyze by hand, so  Coregliano and Jeronimo automate the computations using linear programming. The authors strongly suggest that the limit of their techniques should yield a bound of $n! / 2^{n-o(1)}$, but no finite linear program can certify such a bound, causing them to stop short at approximately $n! / 1.97^n$.

\subsection{Higher Order MDS Codes}\label{subsec:homds}

As we previous discussed, due to the results of Gopalan et al.~\cite{gopalan2014explicit}, we have a full description of the correctable patterns of $(m,n,1,b)$-MR tensor codes. This leads to the next natural question: which codes $C_{\col}$ and $C_{\row}$ have the property that $C_{\col} \otimes C_{\row}$ is a $(m,n,1,b)$-MR tensor code?

As a first step, note that $C_{\col}$ is a $[m,m-1]$-code. As such, we can assume essentially WLOG that $C_{\col}$ is the parity check code: it has $1^m$ as its parity check matrix. In other words, whether $C_{\col} \otimes C_{\row}$ is a $(m,n,1,b)$-MR tensor code depends entirely on the structure of $C_{\row}$.

To get some intuition for what properties $C_{\row}$ should have, let's look at the special case of $m=2$. For any $c \in C_{\col} \otimes C_{\row}$, the $C_{\col}$ parity check enforces that $c_{1,i} = -c_{2,i}$ for all $i \in [n]$. Therefore, up to a scaling of the second row, $c$ is a repetition encoding of some $c' \in C_{\row}$. Now, consider an arbitrary erasure pattern $E \subseteq [2] \times [n]$ which is $(1,b)$-regular. Note that for all $i \in [n]$, a symbol $c'_i$ is only lost if both $(1,i),(2,i) \in E$. Let $T \subseteq [n]$ be the maximum-sized set for which $[2] \times T \subseteq [n]$, (i.e., the symbols of $T$ are ``lost''). By Definition~\ref{def:reg}, we have for $S = [2]$ and $T$, we have that $2|T| = |E \cap (S \times T)| \le |T| +2b-b$. Thus, $|T| \le b$. In other words, the property of $C_{\col} \otimes C_{\row}$ being a $(2,n,1,b)$-MR tensor code is equivalent to the property that $C_{\row}$ can recover from any $b$ symbols being erased. Since $C_{\row}$ is a $[n,n-b]$-code, this is equivalent to $C_{\row}$ attaining the Singleton bound. In other words, we proved the following.

\begin{proposition}\label{prop:MDS-2}
$C_{\col} \otimes C_{\row}$ is a $(2,n,1,b)$-MR tensor code if and only if $C_{\row}$ is MDS.
\end{proposition}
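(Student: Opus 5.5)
The plan is to use Theorem~\ref{thm:reg} to turn the statement into a claim about which codes $C_{\row}$ recover which sets, using the structure forced by $m=2$. Throughout, $C_{\col}$ is the $[2,1]$ code with parity check $1^2$. Every $c \in C_{\col}\otimes C_{\row}$ therefore has the form $c_{1,i}=c'_i$ and $c_{2,i}=-c'_i$ for a unique $c' \in C_{\row}$. This map $c' \mapsto c$ is a linear bijection $C_{\row}\to C_{\col}\otimes C_{\row}$.

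\textbf{Step 1: recoverability of $E$ reduces to recoverability of a single set in $C_{\row}$.} For a pattern $E \subseteq [2]\times[n]$, let $T_E := \{i \in [n] : (1,i),(2,i)\in E\}$ be the set of fully erased columns. I claim $E$ is recoverable for $C_{\col}\otimes C_{\row}$ if and only if $T_E$ is recoverable for $C_{\row}$.

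First suppose $T_E$ is recoverable for $C_{\row}$. For every column $i \notin T_E$, at least one of $c_{1,i}$ and $c_{2,i}$ survives, and hence $c'_i$ is known up to sign. So erasing $E$ reveals $c'|_{[n]\setminus T_E}$, and $c'$, and therefore $c$, is determined.

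Conversely, suppose $T_E$ is not recoverable for $C_{\row}$. Then some nonzero $c' \in C_{\row}$ is supported inside $T_E$. Its image $c$ is a nonzero codeword supported inside $[2]\times T_E \subseteq E$, so $E$ is not recoverable. Equivalently, in the language of Proposition~\ref{prop:recov-rank}, the columns of the parity-check matrix indexed by $E$ are then dependent.

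\textbf{Step 2: compute the $(1,b)$-regular patterns for $m=2$.} The computation before the proposition, taking $S=[2]$ and $T=T_E$, shows that every $(1,b)$-regular $E$ has $|T_E|\le b$. Conversely, I need every $T\subseteq[n]$ with $|T|\le b$ to arise as $T_E$ for some regular $E$; the natural choice is $E=[2]\times T$. To verify regularity of this $E$, I check Definition~\ref{def:reg} for all $S$ with $|S|\ge 1$ and all $T'$ with $|T'|\ge b$.
\begin{itemize}
\item If $|S|=1$, then $|E\cap(S\times T')| \le |T\cap T'| \le |T'| \le |T'|+b-b$.
\item If $|S|=2$, then $|E\cap(S\times T')| = 2|T\cap T'| \le |T'|+b$, because $|T\cap T'|\le \min(b,|T'|)$.
\end{itemize}
So the regular patterns produce exactly the sets $T_E$ with $|T_E|\le b$.

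\textbf{Step 3: combine.} By Theorem~\ref{thm:reg}, $C_{\col}\otimes C_{\row}$ is MR exactly when every $(1,b)$-regular $E$ is recoverable. By Steps~1 and~2, this holds exactly when every $T\subseteq[n]$ with $|T|\le b$ is recoverable for $C_{\row}$. For an $[n,n-b]$ code, that condition is precisely the MDS property, i.e.\ meeting the Singleton bound.

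The only step needing real care is the converse direction of Step~1: it requires exhibiting an actual nonzero codeword inside $E$, not just losing information. The explicit lift $c_{1,i}=c'_i$, $c_{2,i}=-c'_i$ handles this directly.
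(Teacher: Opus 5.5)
Your proof is correct and follows the same route as the paper. You use the lift $c_{1,i}=c'_i$, $c_{2,i}=-c'_i$ to reduce recoverability of $E$ to recoverability of the fully erased column set $T_E$ in $C_{\row}$, then apply $(1,b)$-regularity with $S=[2]$ to get $|T_E|\le b$. You also fill in details the paper leaves implicit: both directions of that reduction, and the check that $[2]\times T$ is $(1,b)$-regular for every $|T|\le b$, which is what forces every $b$-subset to be recoverable in $C_{\row}$.
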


For example, we can set $C_{\row}$ to be any Reed-Solomon code with the appropriate length and dimension. It is not hard to prove by similar logic that for any $m \ge 3$, if $C_{\col} \otimes C_{\row}$ is a $(m,n,1,b)$-MR tensor code then $C_{\row}$ is MDS. However, the converse is not true. For example, the following pattern is correctable by any $(3,6,1,3)$-MR tensor code, but there exist Reed-Solomon codes $C_{\row}$ for which $C_{\col} \otimes C_{\row}$ cannot correct this pattern.
\begin{center}
     \begin{tabular}{|c|c|c|c|c|c|}\cline{1-6}
      \bd & \bd & \bd & \bd & \gd & \gd\\\cline{1-6}
      \bd & \bd & \gd & \gd & \bd & \bd\\\cline{1-6}
      \gd & \gd & \bd & \bd & \bd & \bd\\\cline{1-6}
    \end{tabular}
\end{center}
Therefore, for $m \ge 3$, the property of $C_{\col} \otimes C_{\row}$ being a $(m,n,1,b)$-MR tensor code is a strictly stronger condition than $C_{\row}$ being MDS. This observation led to Brakensiek, Gopi, and Makam (BGM)~\cite{bgm2021mds} defining a notion of \emph{higher order MDS} (higher order MDS). 

\begin{definition}[\cite{bgm2021mds}]
We say that a $[n,n-b]$-code $C$ is $\MDS(m)$ (i.e., higher order MDS of order $m$) if $C_{\col} \otimes C$ is a $(m,n,1,b)$-MR tensor code, where $C_{\col}$ is the $[m,m-1]$ parity check code.
\end{definition}

By Proposition~\ref{prop:MDS-2}, we have that $\MDS(2)$ is equivalent to ``ordinary'' MDS. However, $\MDS(3)$ is a strictly stronger notion. The paper~\cite{bgm2021mds} establishes a number of properties of higher order MDS codes. 
For example, it is shown that for any code $C$ which is $\MDS(3)$, its dual code is also $\MDS(3)$, generalizing the analogous property of MDS codes. However, such duality fails for $\MDS(m)$ for $m \ge 4$, although some weaker results can be proved.

\subsubsection{Connections to List-decoding}

Soon after \cite{bgm2021mds} appeared, Roth~\cite{roth2021higher} independently formulated a notion of higher order MDS codes in a seemingly rather different context: \emph{average-radius list decoding}. Concretely, given a $[n,k]$-code $C$ and parameters $\tau, L \in \mathbb N$, we say that $C$ is $(\tau, L)$-average radius list decodable if for all distinct $c_{1}, \hdots, c_{L+1} \in C$ and $y \in \F_q^n$, we have that $\wt(c_1 - y) + \cdots + \wt(c_{L+1} - y) > \tau (L+1),$ where $\wt$ denoted Hamming weight.
A natural question is given fixed parameters $n,k,L$, what is the optimal value of $\tau$? Note that for $L=1$, average-radius list decoding simply demands that the code has minimum distance at least $2 \tau$, and thus the optimal $\tau$ equals $(n-k)/2$ by the Singleton bound. In general\footnote{For the purposes of this survey, we ignore issues concerning extreme parameters{--}see Roth~\cite{roth2021higher} for more details.}, the optimal choice of $\tau$ is $\tfrac{L(n-k)}{L+1}$. This leads to the following definition.

\begin{definition}[\cite{roth2021higher}]
An $[n,k]$-code $C$ is\footnote{Brakensiek, Gopi, and Makam~\cite{bgm2022} used the notation ``$\LDMDS(L)$'' to refer to such codes.} $L$-MDS if $C$ is $(\tfrac{L(n-k)}{L+1}, L)$-average radius list decodable.
\end{definition}

Roth used the notion of $L$-MDS to abstract a conjecture of Shangguan and Tamo~\cite{shangguan2020combinatorial} concerning the optimal list-decodability of Reed-Solomon codes. In Roth's language, the Shangguan-Tamo conjecture asserts that for all $n,k,L$ there exists a $[n,k]$-Reed-Solomon code $C$ over some field $\F_q$ which is $L$-MDS.

Serendipitously, Roth proves a number of notions concerning $L$-MDS codes which were similar to results proved in \cite{bgm2021mds}. For example, $1$-MDS are equivalent to MDS, while $2$-MDS is a strictly stronger notion. Likewise, $2$-MDS codes are dual to $2$-MDS codes, but this fails for $3$-MDS codes. {These facts} led Brakensiek, Gopi, and Makam~\cite{bgm2022} to find a deeper reason for the coincidence: {duality}!

\begin{theorem}[\cite{bgm2022}]\label{thm:ld-mds}
For all $L \in \mathbb N$, a code $C$ is $\MDS(L+1)$ iff $C^{\perp}$ is $\ell$-MDS for all $\ell \le L$.
\end{theorem}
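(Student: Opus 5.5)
The plan is to put both properties in a common linear-algebraic language, namely conditions on sums of ``shortened'' subcodes of $C$. Duality then turns one into the other. For $A\subseteq[n]$ write $C_A:=\{c\in C:\operatorname{supp}(c)\subseteq A\}$. Write $b=n-k$ with $C$ an $[n,k]$-code, so $C^\perp$ is an $[n,b]$-code. Note that $\dim C_A=|A|-\rank H|_A$, where $H$ is a parity check matrix of $C$ (a generator of $C^\perp$).

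\emph{Step 1 (MDS side).} Let $E\subseteq[m]\times[n]$ and let $A_i\subseteq[n]$ be the set of erased positions in row $i$. A codeword of $C_{\col}\otimes C$ with $C_{\col}$ the parity check code is exactly an $m$-tuple of rows $c^1,\dots,c^m\in C$ with $c^1+\cdots+c^m=0$. Hence $E$ fails to be correctable iff there is a nontrivial such tuple with $c^i\in C_{A_i}$. Equivalently, $E$ is correctable iff the sum map $\bigoplus_i C_{A_i}\to\F^n$ is injective, i.e.
\[
\dim\Big(\sum_{i=1}^m C_{A_i}\Big)=\sum_{i=1}^m \dim C_{A_i}.
\]
By Theorem~\ref{thm:reg}, $C$ is $\MDS(m)$ iff this holds for every $(1,b)$-regular $E$. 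So I will restate $\MDS(L+1)$ as: this directness of sums holds for every choice of $A_1,\dots,A_{L+1}$ satisfying the regularity inequalities. Taking $S$ to be subsets of rows, these inequalities read: for every $S$ with $|S|\ge 1$ and every $T$ with $|T|\ge b$, $\sum_{i\in S}|A_i\cap T|\le |T|+b(|S|-1)$.

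\emph{Step 2 (list-decoding side).} Suppose $C^\perp$ is not $L$-MDS. Then there are distinct $d_1,\dots,d_{L+1}\in C^\perp$ and a word $y$ whose agreement sets $B_i:=\{j:(d_i)_j=y_j\}$ satisfy $\sum_i|B_i|\ge n+Lb$. I would eliminate $y$: such a $y$ exists iff for all $i,j$ the difference $d_i-d_j$ vanishes on $B_i\cap B_j$, plus consistency around triples and beyond. This is a statement that a certain linear map, built from $C^\perp$ restricted to the $B_i$, has a kernel. Dualizing with $(C^\perp|_B)^\perp=C_{B}$-type identities, set $A_i:=\overline{B_i}$. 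Then ``a bad list exists'' becomes ``the subspaces $C_{A_i}$ fail to be in direct sum'', with a size constraint $\sum|A_i|\le L(n-b)$. When some of the $d_i$ coincide, the configuration degenerates to a bad list of a smaller size $\ell<L$. This is exactly why the statement quantifies over all $\ell\le L$. I would handle it by induction on $L$, peeling off repeated codewords.

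\emph{Step 3 (matching the conditions).} Finally I must show that the two families of tuples $(A_1,\dots,A_{L+1})$ produce the same set of constraints. One family is the regular patterns from Step 1; the other is the complements of agreement sets with the counting bound from Step 2, ranging over all $\ell\le L$. One direction: a violation of regularity on a sub-box $S\times T$ forces a dependency that is witnessed by a smaller list, obtained by restricting to the rows in $S$. The other direction: a minimal bad list has agreement sets whose complements satisfy all the regularity inequalities, since otherwise a sub-list would already be bad. I expect this combinatorial matching to be the main obstacle, especially making minimality do the work on both sides. My first attempt would be to use the generic-rank (matroid union / Hall-type) formula for $\dim\sum_i C_{A_i}$. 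That formula would let me compare directly the ``expected'' dimension of each side, rather than comparing the inequalities one by one.
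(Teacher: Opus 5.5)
Your Step 1 is correct, but Step 2 dualizes onto the wrong sets, and this breaks the plan. To eliminate $y$ precisely, let $N$ be the space of $(d_1,\dots,d_{L+1},y)\in(C^\perp)^{L+1}\times\F^n$ with $d_i|_{B_i}=y|_{B_i}$. The annihilator of the image of $(d,y)\mapsto((d_i-y)|_{B_i})_i$ consists of tuples whose zero-extensions $c^i$ lie in $C_{B_i}$ and satisfy $\sum_i c^i=0$. Hence
\[
\dim N=(L+1)b+n-\sum_{i}|B_i|+\dim\ker\Bigl(\bigoplus_{i} C_{B_i}\to\F^n\Bigr).
\]
So the shortened subcodes sit on the agreement sets $B_i$ themselves: erasures correspond to agreements, exactly as your identity $(C^\perp|_B)^\perp=C_B$ indicates. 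Moreover, the bad-list threshold $\sum_i|B_i|\ge n+Lb$ is exactly the size of a maximal $(1,b)$-regular pattern in $[L+1]\times[n]$.

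With $A_i=\overline{B_i}$ and $\sum_i|A_i|\le L(n-b)$, your claimed equivalence is false. Take $L=b=1$, $n=5$, $C$ the $[5,4]$ code with parity check $1^5$, and $A_1=A_2=\{1,2\}$. Then $C_{A_1}+C_{A_2}$ is not direct, yet $C^\perp$ is the repetition code, which is MDS and has no bad list. So Step 3, which matches complements of agreement sets against regular patterns, is aimed at the wrong family.

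Second, even with $A_i=B_i$, a failure of directness only yields an element of $N$ outside the trivial solutions $\{(d,\dots,d,y)\}$. That means the $d_i$ are not all equal, not that they are pairwise distinct. Handling this is the entire content of the theorem, and Step 3 leaves it open. Your minimality heuristic is also not correct as stated. Here is what is needed.

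For $(\Leftarrow)$: extend a regular non-correctable pattern to a maximal one. This is possible since, by Theorem~\ref{thm:reg}, regular patterns are the independent sets of a matroid of rank $n+Lb$, and maximality forces $\bigcup_iB_i=[n]$. Then $\dim N>b$, which is the dimension of the trivial solutions. Group equal $d_i$ into classes $P_1,\dots,P_s$ with $s\ge2$. Apply regularity to $S=P_t$, $T=\bigcup_{i\in P_t}B_i$ (or the trivial bound if $|T|<b$) to bound the overlap inside each class by $b(|P_t|-1)$. The $s$ distinct codewords then have total agreement at least $n+(s-1)b$, which is a bad list for $\ell=s-1\le L$.

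For $(\Rightarrow)$: trim a bad list to $\sum_i|B_i|=n+Lb$. If the pattern is regular, $\MDS(L+1)$ gives $\dim N=b$ and forces all $d_i$ equal. If regularity fails at some $(S,T)$, it does \emph{not} follow that a sub-list is bad for $C^\perp$, because the violation only counts agreements inside $T$.
\begin{itemize}
\item For $|S|\le L$ you can repair this: redefine $y$ off $T$ to agree with one $d_{i_0}$, and induct on $L$.
\item When $S=[L+1]$ and $T\subsetneq[n]$, you must pass to the punctured code $C^\perp|_T$. It still has dimension $b$, and the $d_i|_T$ remain distinct because $C^\perp$ is MDS and $|T|\ge b$. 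Its dual is the shortening of $C$ to $T$, which inherits $\MDS(L+1)$.
\end{itemize}
So the minimal counterexample must range over punctured/shortened pairs (induction on $n$ as well as $L$), not merely over sub-lists.
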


{P}roving the Shangguan-Tamo conjecture is {then} equivalent\footnote{The dual of a Reed-Solomon code is only a \emph{generalized} Reed-Solomon code, but if a GRS code is {$L$-MDS} then there exists a corresponding RS code which is $L$-MDS.} to showing for all $n,b,m$ there exists a $[n,n-b]$-Reed-Solomon code $C_{\row}$ for which $C_{\col} \otimes C_{\row}$ is a $(m,n,1,b)$-MR tensor code! In the same paper~\cite{bgm2022}, the authors manage to prove such a result by connecting higher order MDS codes {with the} \emph{GM-MDS theorem} {in coding theory}.  

\subsubsection{Connections to the GM-MDS Theorem}

Recall that a given $[n,k]$-code $C \subseteq \F_q^{n}$ can have many generator matrices $G \in \F_q^{k \times n}$. One combinatorial way to distinguish these different generator matrices is to look at their \emph{zero patterns}. We define the zero pattern of a matrix $G$, which {we} denote by $\zp(G)$ is a list of $k$ subsets $(S_1, \hdots, S_k)$ of $[n]$, where $S_i$ for each $i \in [k]$ is the set of indices $j \in [n]$ for which $G_{i,j} = 0$. We {let} $\ZP(C)$ denote the set of all zero patterns $\zp(G)$ where $G$ is a generator matrix of $C$. The set $\ZP(C)$ reveals much information about the structure of $C$. For example if $C$ has a zero pattern $\zp(G) = (S_1, \hdots, S_k)$ where some $|S_i| \ge k$, then the $i$th row of $G$ is a codeword of $C$ with Hamming weight at most $n-k$. Therefore, $C$ cannot be MDS.

A natural question asked by Dau, Song, and Yuen~\cite{dau2015simple} is given a MDS $C$, which zero patterns are attainable? They prove the following.

\begin{theorem}[\cite{dau2015simple}]\label{thm:gm}
For any $S_1, \hdots, S_k \subseteq [n]$, we have that $(S_1, \hdots, S_k) \in \ZP(C)$ for some MDS code $C$ if and only if for all nonempty $I \subseteq [k]$, we have that $\left|\bigcap_{i \in I} S_i\right| \le k - |I|$.
\end{theorem}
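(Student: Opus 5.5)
Necessity is immediate; for sufficiency I would substitute generic entries into the prescribed zero pattern and reduce, via a Hall-type matching argument, to nonvanishing of a single symbolic determinant for each $k$-set of columns.

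\emph{Necessity.} Suppose $G$ generates an MDS code $C$ with $\zp(G)=(S_1,\dots,S_k)$, and let $I\subseteq[k]$ be nonempty. Set $S=\bigcap_{i\in I}S_i$ and suppose $|S|\ge k-|I|+1$. The rows $\{G_i\}_{i\in I}$ are linearly independent and vanish on $S$. Hence $C$ contains a $|I|$-dimensional subspace vanishing on $S$. On the other hand, the puncturing $C|_{[n]\setminus S}$ has dimension at most $|I|$ less than $k$; more carefully, the subcode of $C$ vanishing on $S$ has dimension exactly $k-\min(k,|S|)$, because $C$ is MDS and so any $\min(k,|S|)$ coordinates are information coordinates. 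This gives $|I|\le k-|S|$, that is, $|S|\le k-|I|$.

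\emph{Sufficiency, reduction.} Given the intersection condition, let $G$ have $G_{i,j}=0$ for $j\in S_i$ and an independent indeterminate $z_{i,j}$ otherwise. We want each $k\times k$ minor $\det G|_T$ ($|T|=k$) to be a nonzero polynomial; Schwartz--Zippel over a large enough field then yields a simultaneous nonvanishing assignment, and that assignment gives an MDS code with the zero pattern $(S_1,\dots,S_k)$. (If the exact pattern is demanded, also require every $z_{i,j}\neq 0$, which is one more nonzero polynomial.) The entries are distinct variables, so $\det G|_T\ne0$ if and only if the bipartite graph between rows $[k]$ and columns $T$, with $i\sim j$ iff $j\notin S_i$, has a perfect matching.

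\emph{Sufficiency, Hall's condition.} For $I\subseteq[k]$, the neighborhood of $I$ in $T$ is $T\setminus\bigcap_{i\in I}S_i$. Its size is at least $k-|\bigcap_{i\in I} S_i|\ge |I|$, using $|T|=k$. By Hall's theorem a perfect matching exists.

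The main obstacle is keeping the necessity argument clean, namely the dimension count for the subcode of an MDS code vanishing on $S$; the rest is routine. This generic-matrix result is the ``easy'' existence statement. The hard version, the GM-MDS theorem that Reed--Solomon codes achieve the same patterns, is presumably what the paper addresses next, and it needs much more than Hall's theorem.
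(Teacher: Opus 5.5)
Your proof is correct and complete. The survey states this theorem without proof (it only cites Dau--Song--Yuen), and your argument is the standard one for it: a generic matrix with the prescribed zeros, Hall's theorem to show each $k\times k$ minor is a nonzero polynomial, then Schwartz--Zippel.

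One cosmetic fix: drop the sentence about the puncturing $C|_{[n]\setminus S}$ in the necessity part, which is garbled and not needed. Your count $\dim\{c\in C : c|_S=0\}=k-\min(k,|S|)$, together with $|I|\ge 1$ (which rules out $|S|\ge k$), already gives $|S|\le k-|I|$.
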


Note this result leads to a question similar to maximal recoverability, namely which codes $C$ attain all (or at least many) of the zero patterns identified by Theorem~\ref{thm:gm}?  \cite{bgm2022} give a precise answer to this question. We say that $(S_1, \hdots, S_k)$ is an \emph{order-$m$ zero pattern} if it satisfies the inequalities of Theorem~\ref{thm:gm} and there are most $m$ distinct non-empty sets among $S_1, \hdots, S_k$.

\begin{theorem}[\cite{bgm2022}]\label{thm:mds-gm}
For all $m \ge 2$, a code $C$ is $\MDS(m)$ if and only if every order-$m$ zero pattern is attained by $C$.
\end{theorem}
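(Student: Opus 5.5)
The plan is to translate both conditions in Theorem~\ref{thm:mds-gm} into statements about the same family of subspaces of $C$, and then compare them. Write $k = n-b$. For $S \subseteq [n]$, let $V(S) := \{c \in C : c_j = 0 \text{ for all } j \in S\}$; for an MDS code, $\dim V(S) = \max(k-|S|,0)$.

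\emph{Tensor side.} Let $C_{\col}$ be the parity check code. A codeword of $C_{\col}\otimes C$ is an $m\times n$ matrix whose rows $c_1,\dots,c_m$ lie in $C$ and sum to $0$. Given $E \subseteq [m]\times[n]$, let $E_i$ be its $i$th row and $S_i := [n]\setminus E_i$. Then a nonzero codeword supported in $E$ is exactly a nontrivial relation $\sum_i c_i = 0$ with $c_i \in V(S_i)$. Hence $E$ is correctable iff the sum $V(S_1)+\cdots+V(S_m)$ is direct. More generally, for $J \subseteq [m]$,
\[
\dim \sum_{i\in J} V(S_i) = \sum_{i\in J}\dim V(S_i) - (\text{dimension of codewords supported on } E\cap(J\times[n])).
\]

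\emph{Zero-pattern side.} An order-$m$ zero pattern consists of distinct sets $T_1,\dots,T_m$ with multiplicities $d_1,\dots,d_m$, where $\sum d_i = k$. It is attained by $C$ iff there are subspaces $W_i \subseteq V(T_i)$ with $\dim W_i = d_i$ whose sum is direct (and hence all of $C$). By Rado's theorem for subspaces (matroid union/intersection), this holds iff
\[
\dim \sum_{i\in J} V(T_i) \ \ge\ \sum_{i\in J} d_i \qquad \text{for all } J \subseteq [m].
\]
So both properties are governed by the dimensions of sums of the $V(\cdot)$'s, equivalently by which erasure patterns are recoverable.

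\emph{($\Rightarrow$).} Suppose $C$ is $\MDS(m)$. Then every such dimension equals its value for a generic $[n,k]$ code, since the kernel dimensions above are determined by the recoverability of the patterns $E\cap(J\times[n])$, and an MR code recovers exactly what a generic code recovers. By the GM-MDS theorem (the strengthening of Theorem~\ref{thm:gm} that a generic, or even Reed--Solomon, code attains every zero pattern satisfying the Dau--Song--Yuen inequalities), the generic code satisfies Rado's inequalities for every order-$m$ zero pattern. Therefore $C$ satisfies them as well.

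\emph{($\Leftarrow$).} Suppose every order-$m$ zero pattern is attained by $C$. Take a maximal $(1,b)$-regular $E$, so that $|E| = n + (m-1)b$ and each $|E_i| \ge b$. Set $T_i = S_i$ and $d_i = |E_i| - b = k - |S_i|$; then $\sum_i d_i = |E| - mb = k$. Attainment gives a direct sum of subspaces $W_i \subseteq V(S_i)$ of total dimension $k$. Since $\dim V(S_i) = d_i$ for the MDS code $C$ (note $C$ is MDS by the case $m=2$), we get $W_i = V(S_i)$. Hence $\sum_i V(S_i)$ is direct, and $E$ is correctable by the tensor-side reformulation. (Rows whose sets coincide can be merged, so the zero pattern stays order-$m$.)

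\emph{Main obstacle.} The step I expect to be hardest is checking that the GM inequalities $\bigl|\bigcap_{i\in I}T_i\bigr| \le k-\sum_{i\in I}d_i$, taken with multiplicity, are implied by $(1,b)$-regularity of $E$. I would first try to prove this by taking $S = I$ and $T$ equal to the columns erased in every row of $I$, and reading off Definition~\ref{def:reg}. I would also handle rows with $d_i = 0$ separately.
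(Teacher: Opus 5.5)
Your $(\Rightarrow)$ direction is sound; the paper states this result without proof, so I am judging your argument on its own. The dimensions $\dim\sum_{i\in J}V(T_i)$ with $|J|\le m$ are determined by the matroid of $C_{\col}\otimes C$. An extra block of empty sets is harmless because $V(\emptyset)=C$. To see that the generic code satisfies your Rado inequalities, Theorem~\ref{thm:gm} plus lower semicontinuity of rank is enough; GM-MDS is more than you need.

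The genuine gap is in $(\Leftarrow)$, at ``$C$ is MDS by the case $m=2$''. This is circular, since your $m=2$ argument uses the same step $\dim V(S_i)=d_i$. Worse, with attainment read as you read it (rows vanishing \emph{at least} on the $S_i$, i.e.\ Rado), it is false. Take $C=\{(0,x,y)\}\subseteq\F^3$ with $k=2$, $b=1$. Since coordinate $1$ is identically zero and $C|_{\{2,3\}}=\F^2$, $C$ attains every pattern satisfying the inequalities of Theorem~\ref{thm:gm}: just delete coordinate $1$ from each $S_i$. Yet $C$ is not MDS. Concretely, $E=\{(1,1),(1,2),(2,2),(2,3)\}$ is a maximal $(1,1)$-regular pattern in the $2\times 3$ grid. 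Your recipe produces the attained pattern $(\{3\},\{1\})$, but $\dim V(\{1\})=2>d_2=1$, so $W_2\neq V(S_2)$. Indeed the rows $(0,1,0),(0,-1,0)$ form a nonzero codeword supported on $E$.

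So you must carry ``$C$ is MDS'' as an explicit hypothesis. Your two directions then prove ``$C$ is $\MDS(m)$ iff $C$ is MDS and attains every order-$m$ pattern'', which is all the Reed--Solomon application needs. Alternatively, you could read attainment as the exact $\zp(G)$ of the paper. Then MDS does follow: $(\emptyset,\dots,\emptyset)$ rules out zero coordinates. And if some $k$ columns $A$ are dependent with $a$ on a circuit, the order-$2$ pattern with $\{a\}$ repeated $k-1$ times and $A\setminus\{a\}$ once violates your Rado inequality. But under this reading your $(\Rightarrow)$ would need an extra argument avoiding unwanted zeros, and that fails over small fields. For example, the $[3,2]$ parity code over $\F_2$ is $\MDS(m)$ for all $m$, yet no generator matrix has zero pattern $(\emptyset,\emptyset)$.

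Second, your plan for the ``main obstacle'' uses the wrong $T$. With $T=\bigcap_{i\in I}E_i$ (columns erased in every row of $I$), regularity only yields $(|I|-1)(|T|-b)\le 0$, which says nothing about the needed inequality. Take instead $T=[n]\setminus\bigcap_{i\in I}S_i$, the columns erased in \emph{some} row of $I$. Then $E\cap(I\times T)=E\cap(I\times[n])$, and Definition~\ref{def:reg} with $S=I$ gives $\sum_{i\in I}|E_i|\le |T|+(|I|-1)b$. Using $d_i=|E_i|-b$ and $k=n-b$, this is exactly $\bigl|\bigcap_{i\in I}S_i\bigr|\le k-\sum_{i\in I}d_i$. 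Three side points are then routine:
\begin{itemize}
\item $|T|\ge b$ holds automatically, because every $|E_i|\ge b$; that in turn follows from regularity with $S=[m]\setminus\{i\}$ and $T=[n]$.
\item Sub-multisets with fewer copies give weaker inequalities.
\item Rows with $d_i=0$ do not appear in the pattern, and they have $V(S_i)=0$.
\end{itemize}
With these two repairs the argument goes through.
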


Combining Theorem~\ref{thm:ld-mds} and Theorem~\ref{thm:mds-gm}, one can resolve the Shangguan-Tamo conjecture by showing that there exist Reed-Solomon codes attaining every generic zero pattern. This question, however, was already answered independently by Lovett~\cite{lovett2018gmmds} and Yildiz and Hassibi~\cite{yildiz2019gmmds} in the form of the GM-MDS theorem.

\begin{theorem}[GM-MDS Theorem, \cite{lovett2018gmmds,yildiz2019gmmds}]\label{thm:gm-mds}
For every zero pattern identified by Theorem~\ref{thm:gm}, there is a Reed-Solomon code attaining that zero pattern.
\end{theorem}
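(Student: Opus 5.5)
The plan is to follow the algebraic route of Lovett and of Yildiz--Hassibi. Fix a zero pattern $(S_1,\hdots,S_k)$ satisfying the conditions of Theorem~\ref{thm:gm}, i.e., $|\bigcap_{i\in I} S_i| \le k-|I|$ for all nonempty $I\subseteq[k]$. A generator matrix of a Reed--Solomon code with evaluation points $\alpha_1,\hdots,\alpha_n$ is $G = M V$, where $V$ is the $k\times n$ Vandermonde matrix and $M\in\F^{k\times k}$ is invertible. Row $i$ of $G$ is then the evaluation vector of a polynomial $f_i$ of degree $<k$. We need $f_i(\alpha_j)=0$ for all $j\in S_i$, so we take $f_i$ to be divisible by $\prod_{j\in S_i}(x-\alpha_j)$. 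We may first enlarge each $S_i$ to have size exactly $k-1$ while preserving the intersection inequalities; this is a standard Hall-type padding step. Then $f_i$ is forced, up to scaling, to be $p_i(x):=\prod_{j\in S_i}(x-\alpha_j)$. So the statement reduces to showing that, for generic (symbolic) $\alpha_1,\hdots,\alpha_n$, the polynomials $p_1,\hdots,p_k$ are linearly independent in the $k$-dimensional space of polynomials of degree $<k$. That is, the $k\times k$ coefficient matrix $P$ of the $p_i$ has $\det P\ne 0$ as a polynomial in the $\alpha_j$. Once this holds, Schwartz--Zippel provides distinct $\alpha_j$ in a large enough field, and then $G=PV$ has exactly the pattern: entries outside $S_i$ are nonzero generically, which can be absorbed into the same nonvanishing argument.

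The core step is proving $\det P\not\equiv 0$ by induction on $n$ and $k$. We consider two moves that preserve the intersection condition.

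\emph{Case 1.} If some index $j$ lies in no $S_i$, or the condition is slack in a suitable sense, we specialize or remove $\alpha_j$ and reduce $n$.

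\emph{Case 2.} Otherwise, we set $\alpha_{n}=\alpha_{n-1}$, i.e., merge two variables. If linear independence holds after this specialization, it holds generically. After merging, each $p_i$ containing both indices acquires a factor $(x-\alpha)^2$. We then split into subfamilies according to which sets contain the merged index. The polynomials divisible by $(x-\alpha)$ can be divided by it, reducing degree and $k$ simultaneously. Linear independence then follows from independence of a smaller instance together with independence of the complementary family modulo $(x-\alpha)$.

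The key structural lemma is the choice of which pair to merge so that both resulting instances still satisfy $|\bigcap_{i\in I}S_i|\le k-|I|$. One considers the tight sets $I$, those with equality, and uses the fact that tight families are closed under union and intersection when they overlap, giving a submodularity argument.

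The main obstacle is this last combinatorial step: guaranteeing that some merge (or deletion) keeps every subinstance valid. If no tight constraint forbids it, any merge works. Otherwise, I would pick a maximal tight $I$ and restrict the induction to the subsystem $\{S_i\}_{i\in I}$, which lives on fewer effective degrees of freedom. I would solve that subsystem first, then quotient by the common factor $\prod_{j\in\bigcap_{I}S_i}(x-\alpha_j)$ and handle the rest. Checking that the quotient instance inherits the Hall-type inequalities is where I expect most of the care to be needed.
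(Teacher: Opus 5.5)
The paper gives no proof of this theorem; it is quoted from Lovett and Yildiz--Hassibi, so the comparison is with those proofs. Your reduction is the standard one and is fine. Pad each $S_i$ to size $k-1$, most simply with fresh coordinates that you puncture afterwards, which also keeps the zero pattern exact. Then reduce to $\det P\not\equiv 0$ for $p_i=\prod_{j\in S_i}(x-\alpha_j)$ and finish with Schwartz--Zippel. The gap is in the inductive core, which is the whole content of the theorem.

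\textbf{Case 2 fails as stated.} The quotient $\F[x]_{<k}/(x-\alpha)\F[x]_{<k-1}$ is one-dimensional (it is evaluation at $\alpha$). So the polynomials not divisible by $x-\alpha$ can be independent modulo $x-\alpha$ only if there is at most one of them. In general many $S_i$ avoid both merged indices. What you actually need is their independence modulo the span of the divided family, and that is as hard as the theorem itself. There are two further problems with merging:
\begin{itemize}
\item Merging two indices that lie in a common $S_i$ creates a repeated root, which your hypothesis (sets, distinct roots) does not cover. Passing to multisets is not harmless: in characteristic $2$ the polynomials $(x-a)^2,(x-b)^2,(x-c)^2$ all lie in the span of $x^2$ and $1$, even though the multiset version of the intersection condition holds for $k=3$.
\item You never show that a constraint-preserving merge exists. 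Merging $n-1,n$ violates the constraint for $I=[k]$ whenever every $S_i$ meets $\{n-1,n\}$. It also forces $p_i=p_j$ whenever $S_i=U\cup\{n-1\}$ and $S_j=U\cup\{n\}$.
\end{itemize}

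\textbf{The tight-set step does not close.} After padding, every singleton is tight, and $I=[k]$ is always tight because the condition forces $\bigcap_{i\in[k]}S_i=\emptyset$. So ``a maximal tight $I$'' is just $[k]$; you need one with $2\le|I|\le k-1$. For such $I$, with $T=\bigcap_{i\in I}S_i$ and $|T|=k-|I|$, dividing by $p_T$ does give a valid smaller instance, so $\{p_i\}_{i\in I}$ spans $p_T\cdot\F[x]_{<|I|}$. But the remaining $|T|$ polynomials must then be independent modulo that space. Equivalently, the matrix $(p_{S_i}(\alpha_t))_{t\in T,\,i\notin I}$ must be nonsingular, and this is not an instance of your statement ``$k$ sets of size $k-1$.''

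\textbf{How the known proofs close the induction.} They strengthen the statement so that it is stable under this step; this is essentially Lovett's formulation. Each $S_i$ with $|S_i|\le k-1$ contributes the full space $p_{S_i}\cdot\F[x]_{<k-|S_i|}$. The hypotheses are $\sum_i(k-|S_i|)=k$ and $|\bigcap_{i\in I}S_i|+\sum_{i\in I}(k-|S_i|)\le k$ for all nonempty $I$. A nontrivial tight family $I$ is then replaced by the single set $T$. The new system again satisfies the hypothesis, by the original inequality for $I\cup J$ with $J$ a subset of the complement of $I$.

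What remains, and what is the real heart of the proof, is the case with no nontrivial tight family, handled inside this strengthened class. Your proposal leaves exactly this step open, as you yourself note.
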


As a corollary (via an application of the Schwarz-Zippel lemma similar to that of Theorem~\ref{thm:mr-lrc-random}), 
there exists a Reed-Solomon code over an exponential size field that satisfies \emph{all} possible zero patterns, and thus is an $\MDS(m)$ code for all $m \ge 2$. The dual of this code {resolves} the Shangguan-Tamo conjecture.

\subsection{Further Applications}\label{subsec:matroid}

So far, we have discussed in detail various properties of MR tensor codes. We now discuss broader applications of {MR tensor codes.}

\subsubsection{Further Applications to List Decoding}

Although the paper~\cite{bgm2022} resolved Shangguan-Tamo's conjecture, it did not fully resolve the question of the minimum field size for which Reed-Solomon codes can achieve \emph{(average radius) list-decoding capacity}. More precisely, instead of requiring our code $C$ to be precisely $(\tfrac{L(n-k)}{L+1}, L)$-average radius list decodable, what if we relax the condition to being $((1-\eps)(n-k), O(1/\eps))$-average radius list decodable, for some $\eps \in (0,1)$? {Alrabiah, Guo, Guruswami, Li, and Zhang~\cite{alrabiah2025random} show} that random Reed-Solomon codes achieve list-decoding capacity at field size $q = n + k \cdot 2^{O(1/\eps^2)}$.

We note that this relaxation parameter $\eps$ is necessary, as the MR tensor code lower bound of Brakensiek, Dhar, and Gopi~\cite{bdg2024size} showed precisely optimal list-decodability requires exponential field size. In fact, Alrabiah, Guruswami, and Li~\cite{alrabiah2023ag} showed that being $((1-\eps)(n-k), O(1/\eps))$-average radius list decodable requires a field size at least $\exp(\Omega(1/\eps))$, even if the code is non-linear! This nearly matches an upper bound of $\exp(O(1/\eps^2))$ they establish for random linear codes.

We note that these list-decoding results inspired Brakensiek, Dhar, Gopi, and Zhang~\cite{brakensiek2025ag} to define a notion of \emph{relaxed} higher order MDS codes in the sense that not every possible pattern needs to be correctable. This framework {was used to show} that other families of codes can achieve list-decoding capacity, including random algebraic geometry codes~\cite{brakensiek2025ag}.

\subsubsection{Applications to Variable Packet-error Coding}

Very recently, higher order MDS codes were applied by Kong, Wang, Roth, and Tamo~\cite{kong2025new} in the context of \emph{variable packet-error coding} (VPEC) arising from motivations in the field of network error correction (NEC). Here, the objective of VPEC is to transmit data across many packets, where the redundancy within each packet dynamically adjusts to take advantage of fluctuations in the error rate of transmission.  One of their constructions involves \emph{interleaving} multiple higher order MDS codes {together.}

\subsubsection{Connections to Structural Rigidity Theory}\label{subsec:rigid}

So far, all the connections we have described for MR tensor codes have been within the realm of error-correcting codes. However, recent work has shown that the study of correctable patterns in MR tensor codes is equivalent to well-studied questions in \emph{matroid theory}, more precisely in \emph{structural rigidity theory}.

Informally, an undirected graph $G = (V,E)$ is \emph{$d$-rigid}\footnote{As far as the authors are aware, graph rigidity is unrelated to the computer science problem of \emph{matrix rigidity}.} if there is an embedding\footnote{There's a precise requirement that the embedding must be sufficiently ``generic.'' For instance, one cannot map all the points to the same line.} of the vertices $V$ into $d$-dimensional space $\R^d$ such that every infinitesimal motion of the vertices changes the length of some edge (except for rotations and translations). For example, three points forming a path is \emph{not} $2$-rigid because one can flex the two edges around the central vertex. A complete classification of 2-rigid graphs was established by Pollaczek-Geiringer~\cite{PollaczekGeiringer} and Laman~\cite{Laman}. A recent article by Brakensiek, Eur, Larson, and Li~\cite{BELL25} derived a novel proof of this classification using the GM-MDS theorem via connections to the algebraic geometry concept of \emph{cross-ratio degrees} (e.g., \cite{JK15,CGS20,S22}).

The study of $3$-rigid graphs was initiated by James Clerk Maxwell in 1864~\cite{maxwell64,maxwell}. It is a long-standing open question to give a deterministic polynomial-time characterization of $3$-rigid graphs. See the survey of Cruickshank, Jackson, Jord{\'a}n, and Tanigawa~\cite{cruickshank2025rigidity} for further background on rigidity problems.

In a related vein, a recent result by Brakensiek, Dhar, Gao, Gopi, and Larson~\cite{brakensiek2024rigidity} builds a precise connection between {MR} and \emph{bipartite rigidity}, coined by Kalai, Nevo, and Novik~\cite{kalai2015Bipartite} {where} one embeds the two sides of a bipartite graph into orthogonal spaces.

\begin{theorem}[\cite{brakensiek2024rigidity}, informal]\label{thm:rigid}
A pattern $E \subseteq [m] \times [n]$ of size $(m-a)(n-b)$ in a $(m,n,a,b)$-MR tensor code {over $\F$} is correctable\footnote{The geometric definition of rigidity only makes sense in characteristic zero fields like $\R$, but there are alternative definitions for finite fields for which the equivalence still holds.} if and only if there is a bipartite rigid embedding of $E$ into $({\F}^a, {\F}^b)$.
\end{theorem}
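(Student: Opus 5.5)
We prove both directions by comparing two generic matrices: the parity check matrix $H$ of the MR tensor code restricted to $E$, and the bipartite rigidity matrix of $E$. The plan is to show that their column matroids (rows, in the rigidity convention) coincide when the entries are generic.

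\textbf{Step 1: write both matrices symbolically.} As in the sketch of Theorem~\ref{thm:bernstein}, take the parity check matrices of $C_{\col}$ and $C_{\row}$ to have generic columns $x_i \in \F^a$ for $i \in [m]$ and $y_j \in \F^b$ for $j \in [n]$. Then the column of $H$ indexed by $(i,j)$ is $x_i \otimes e_j + e_i \otimes y_j$, up to the deletion of the redundant $ab$ rows.

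\textbf{Step 2: recall the bipartite rigidity matrix.} For an embedding $p_i \in \F^a$ of the left vertices and $q_j \in \F^b$ of the right vertices into orthogonal spaces, the edge $(i,j)$ has the row
\[
(0,\dots,q_j^{\top},\dots,0 \mid 0,\dots,p_i^{\top},\dots,0),
\]
with $q_j^\top$ in block $i$ and $p_i^\top$ in block $j$. This is the differential of the bilinear "length" $\langle p_i, q_j\rangle$ with the coordinates $p_i, q_j$ suitably padded. The trivial motions form an $ab$-dimensional space, namely the maps $(p,q)\mapsto (Mq, -M^\top p)$. Hence full rigidity with independence means the edge rows are independent and number $a(n)+b(m)$ minus $ab$ minus the generic kernel dimension.

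\textbf{Step 3: identify the two matrices and count.} The rigidity row of $(i,j)$ is exactly the $H$ column of $(i,j)$ with the roles swapped: $q_j$ plays $y_j$ and $p_i$ plays $x_i$. So the two matrices agree up to transpose and a relabeling of coordinates.

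The row count does not match directly. $H$ as written has $an+bm$ rows, while a code of dimension $(m-a)(n-b)$ has redundancy $mn-(m-a)(n-b)=an+bm-ab$. So one must delete $ab$ rows, equivalently quotient by the trivial motions. Correctability of $E$ means independence of the columns $E$ of $H$ (Proposition~\ref{prop:recov-rank}). The corresponding statement for the rigidity matrix is independence of the edge rows. For $|E|$ equal to the redundancy, this independence is equivalent to the generic embedding being infinitesimally rigid.

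The pattern size stated in the theorem is read here as the complement size, since correctable patterns have size $an+bm-ab$. The size $(m-a)(n-b)$ should instead correspond to rigidity of the complement or of a dual framework. If so, I would invoke the duality of the tensor code, $(C_{\col}\otimes C_{\row})^\perp$, which relates recoverability of $E$ to a property of $\bar E$. This is the main bookkeeping step to check.

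\textbf{Step 4 (main obstacle): genericity.} An MR tensor code realizes the generic matroid of $H$, by the Schwarz--Zippel argument in the proof of Theorem~\ref{thm:mr-lrc-random}. Bipartite rigidity of $E$ is likewise a generic-rank condition on the same polynomial matrix. The subtle points are two. First, one must verify that the deleted $ab$ rows of $H$ correspond exactly to the trivial-motion kernel, so that ranks agree. I would do this by showing that $\sum_i x_i\otimes(\cdot)$-type dependencies among the rows span an $ab$-dimensional space. Second, over finite fields, "rigid embedding" must be defined algebraically as a generic rank condition rather than geometrically. Once both are settled, the equivalence follows immediately.
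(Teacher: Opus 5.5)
The survey states this theorem informally and gives no proof, so there is nothing in the paper to compare against. Judged on its own, your Steps 1--3 contain the right core. The column of the unreduced parity-check matrix ($H_{\col}\otimes I_n$ stacked on $I_m\otimes H_{\row}$) indexed by $(i,j)$ is $(x_i\otimes e_j,\ e_i\otimes y_j)$. After reordering coordinates this is exactly the bipartite rigidity row of the edge $(i,j)$, and an MR code realizes the generic matroid of this symbolic matrix by Schwarz--Zippel.

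You were right that the sizes do not match, but the resolution you propose is wrong. Reading $(m-a)(n-b)$ as $|\bar E|$ and passing through $(C_{\col}\otimes C_{\row})^\perp$ cannot yield rigidity of $\bar E$. Duality only says that $E$ is correctable iff $\bar E$ spans the column matroid of $G_{\col}\otimes G_{\row}$, i.e.\ of generic rank-one tensors $u_i\otimes v_j$ with $u_i\in\F^{m-a}$, $v_j\in\F^{n-b}$. That matroid is the \emph{dual} of the $(a,b)$ rigidity matroid, not a rigidity matroid into $(\F^a,\F^b)$.

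Already for $m=n=2$, $a=b=1$ the complement reading fails. Every $3$-cell pattern is correctable, being a spanning tree of $K_{2,2}$. Its complement is a single edge, which is not $(1,1)$-rigid, since that matroid is the graphic matroid of rank $3$.

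The correct reading is that $(m-a)(n-b)$ is a slip for the size of the complement, so the intended hypothesis is $|E|=an+bm-ab$. Taken literally the statement is false: for $m=n=3$, $a=b=1$, a $4$-edge forest is correctable but not rigid.

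With $|E|=an+bm-ab$, the sentence you already wrote in Step 3 is the whole proof. $E$ is correctable iff the generic rigidity rows indexed by $E$ are independent. Since $|E|$ equals the generic rank $an+bm-ab$ of the rigidity matrix of $K_{m,n}$, this holds iff $E$ is (minimally) rigid. No duality enters.

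The first ``subtle point'' of Step 4 is vacuous. Independence of the columns of $H|_E$ depends only on the row space of $H$: if $H'=AH$ and $H=BH'$, then $H|_E$ and $H'|_E$ have the same kernel. So no rows need deleting, the rigidity matrix of $E$ is literally $(H|_E)^\top$, and the $ab$-dimensional left kernel of $H$ is automatically the space of trivial motions.

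There are also some smaller slips:
\begin{itemize}
\item For $a\ne b$ the ``length'' $\langle p_i,q_j\rangle$ does not typecheck. The row is the coefficient vector of the linear form $(U,W)\mapsto\langle U_i,q_j\rangle+\langle p_i,W_j\rangle$, with velocities $U_i\in\F^b$ and $W_j\in\F^a$.
\item The trivial motions are $U_i=N^\top p_i$, $W_j=-Nq_j$ for $N\in\F^{a\times b}$, not $(Mq,-M^\top p)$.
\item The generic rank count is just $an+bm-ab$, with no further kernel dimension subtracted.
\end{itemize}
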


The bridge built by Theorem~\ref{thm:rigid} connected various results for bipartite rigidity and MR tensor codes. For example, Bernstein's Theorem (Theorem~\ref{thm:bernstein}) was proved in the context of bipartite rigidity, but could be ported\footnote{The reduction is not immediate since \cite{bernstein2017completion} only considers $\R$, but \cite{brakensiek2024rigidity} patches the gap using combinatorial methods.} to MR tensor codes via Theorem~\ref{thm:rigid}. Furthermore, an equivalent version of Theorem~\ref{thm:reg} is proved in \cite{kalai2015Bipartite}. See \cite{brakensiek2024rigidity} for a more thorough discussion of the rigidity literature as well as connections to other problems such as the \emph{matrix completion problem}.

A systematic understanding of which graphs are bipartite rigid is a topic of active research. A recent paper by Jackson and Tanigawa~\cite{jackson2024maximal} gave a conditional $\mathsf{coNP}$ characterization of bipartite rigid graphs. In other words, assuming a combinatorial conjecture, if a graph is not bipartite rigid, there is a short certificate (informally a ``covering'' of the bipartite graph) asserting this fact. A subsequent work by Brakensiek, Chen, Dhar, and Zhang~\cite{brakensiek2025random} gave a deterministic polynomial time algorithm for checking bipartite rigidity assuming the truth of a 45-year-old open question due to Mason~\cite{mason1981glueing}. This latter work builds on \cite{brakensiek2024rigidity} by constructing an explicit family of error-correcting codes known as \emph{folded Reed-Solomon codes} \cite{guruswami2008explicit,guruswami2016explicit} which when tensored together appears to simulate MR codes.

\subsubsection{Connections to Quantum Codes}

In quantum computing, \emph{quantum error-correcting codes} provide vital hope to ensure that computations are done accurately even when the underlying qubits are not perfectly reliable. In recent years, a powerful method for constructing quantum codes (specifically, those with low-weight stabilizers) has emerged via the so-called \emph{product expansion} property of classical codes, with the constructions themselves using tensor products of two or more such codes. In particular, this has led to breakthrough constructions of asymptotically good quantum LDPC codes as well as classical locally testable codes~\cite{dinur2022locally,panteleev2022asymptotically}.
In order to study product expansion and its applications to quantum error-correction more systematically, 
Kalachev and Pantelev in a series of works~\cite{kalachev2022two,panteleev2024maximally,kalachev2025maximally} have initiated the study of \emph{maximally extendable (ME)} codes, drawing inspiration from the study of MR codes.

We now give an informal description of the ME property. Consider an $(m, n, a, b)$ tensor code $C_1 \otimes C_2$ as well as a pattern $E \subseteq [m] \times [n]$ and a partial message $c \in \F^E$. We seek to understand when $c$ can be extended to a full codeword $\bar{c} \in C_1 \otimes C_2$. We say that a subset $\ell \subseteq [m] \times [n]$ is a \emph{line} if is of the form $\{(i, j_0) : i \in [m]\}$ for some $j_0 \in [n]$ or of the form $\{(i_0, j) : j \in [n]\}$ for some $i_0 \in [m]$. If $\ell \subseteq E$, then for $c \in \F^m$ to have an extension in $C_1 \otimes C_2$, it must be the case that the restriction $c|_{\ell}$ be contained in either $C_1$ or $C_2$ (depending on the direction of the line). In this case, we say that $c$ passes all line checks. We say that $E$ is extendable if for any $c \in \F^E$ which passes all line checks induced by $E$, we can extend $c$ to a (not necessarily unique) codeword $c \in C_1 \otimes C_2$. Like in the study of MR, we seek to understand which codes $C_1$ and $C_2$ have the property that $C_1 \otimes C_2$ has the maximal number of extendable patterns (i.e., $C_1 \otimes C_2$ is ME). Understanding how our understanding of MR codes could impact the study of ME codes is an important open question moving forward.


\section{Conclusion and Open Questions}\label{sec:concl}

In this survey, we scratched the surface of the beautiful notion of maximal recoverability in coding theory. In the case of MR LRCs, the use of rich algebraic techniques such as skew polynomial codes highlights the nontriviality of optimal constructions. Likewise, in the case of MR GCs, even understanding the combinatorial underpinning leads to diverse connections such as list-decoding, matroid theory, representation theory, and more.

More generally, maximal recoverability is closely tied to \emph{matroid realizability}, where one seeks to find an explicitly specified (typically linear) matroid isomorphic to a given combinatorial matroid. Although studying such questions for general matroids is potentially hopeless, we hope that these bridges built between the maximal recoverability of real-world codes and questions in matroid theory will introduce new synergies between the respective domains. We conclude this survey with a few exciting directions for future exploration.

\paragraph{Improved field-size Bounds for MR LRCs.}

For MR LRCs, when $r=O(1)$ (i.e., constant-sized local groups), we do not know any super-linear in $n$ lower bound on the field size, even though the constructions use much larger fields. Can one narrow this vast gap in our understanding via new lower bound techniques or constructions?

\paragraph{Characterization of correctable patterns in MR GCs.}

As discussed at length, outside of some limited parameter settings, we lack an efficiently-computable characterization of which patterns are correctable in an $(m,n,a,b,h)$ MR grid code. A substantial step forward by Holzbauer, Puchinger, Yaakobi, and Wachter-Zeh~\cite{holzbaur2021correctable} shows that it suffices to characterize the correctable patterns of $(m,n,a,b)$ MR tensor codes. More precisely, a pattern $E \subseteq [m] \times [n]$ is correctable for an $(m,n,a,b,h)$ tensor code if and only if at most $h$ elements of $E$ can be removed to create a pattern correctable for an $(m,n,a,b)$ MR tensor code. Thus, we focus on the tensor code setting.

The (presumably) simplest case for which an efficient characterization is lacking is when $a=b=2$. The breakthrough due to Bernstein~\cite{bernstein2017completion} (Theorem~\ref{thm:bernstein}) gives a novel combinatorial condition for testing correctability, but it is an open problem to make this condition efficient. One potential step toward this goal is a recent \emph{conditional} polynomial time algorithm for testing correctability due to Brakensiek, Chen, Dhar, and Zhang~\cite{brakensiek2025random}. However, one must first resolve an open question due to Mason~\cite{mason1981glueing} on the theory of \emph{abstract tensor matroids}; see the survey of Cruickshank, Jackson, Jord{\'a}n, and Tanigawa~\cite{cruickshank2025rigidity} (particularly Conjecture 5.31) for further discussion. We leave resolving such gaps in the literature as the central open direction in this space.

\paragraph{Improved field size bounds and constructions for higher order MDS codes.}

Returning to the theory of higher order MDS codes, what is the optimal size of a $[n,k]$-code which is $\MDS(m)$ when $k$ and $n$ are related by a multiplicative constant? Approximately, the best lower and upper bounds are approximately $\exp(\Omega(n))$~\cite{bdg2024size} and $\exp(O(mn))$~\cite{bgm2022}, respectively. This factor of $m$ is quite significant, and appears (at least superficially) related to the currently best known lower and upper bounds of $\exp(\Omega(1/\eps))$~\cite{alrabiah2023ag} and $\exp(O(1/\eps^2))$~\cite{alrabiah2025random}, respectively, for the optimal list size for a linear code which is $\eps$-close to list-decoding capacity.

Another significant problem is explicitly constructing higher order MDS codes, for which the best known constructions have field size doubly-exponential in $n$ (when $k$ is of size $\Theta(n)$).

\paragraph{Improved field size bounds and constructions for MR GCs.}

As a final open question, recall we discussed in depth the characterization of correctable patterns of MR grid codes with parameters $(m,n,a,b,h) = (m,n,1,1,1)$. The best known construction with field size $O(8^n)$ is not too different from the best known field size lower bound of $\Omega(1.97^n)$. However, for $(m,n,a,b,h) = (m,n,a,b,h)$ much less is known. The same lower bound of $\Omega(1.97^n)$ applies~\cite{brakensiek2025improved}, but the best known construction has field size $n^{O(n)}$~\cite{gopalan2014explicit,Gopalan2016}. Resolving this gap is an intriguing open question. See the recent work of Brakensiek, Dhar, and Gopi~\cite{brakensiek2025improved} for other open questions in this parameter regime.

\paragraph{Discovering further applications of the MR notion.}

As a final open-ended direction, we encourage the exploration of further connections between maximal recoverability and other problems in mathematics and computer science. From direct technical applications to inspirations, MR has found use in fields as disparate as list decoding~\cite{bgm2022}, structural rigidity~\cite{brakensiek2024rigidity}, and quantum computing~\cite{kalachev2025maximally}. We are eager to see what new discoveries the information theory community might unearth via further exploration of the MR notion in the years to come.

\subsection*{Acknowledgments}

JB and VG are supported in part by a Simons Investigator award and NSF grant CCF-2211972.
JB is also supported by NSF grant DMS-2503280. JB and VG thank Yeyuan Chen, Manik Dhar, Jiyang Gao, Parikshit Gopalan, Sivakanth Gopi, Matt Larson, Visu Makam,  Chaoping Xing, Sergey Yekhanin, and Zihan Zhang for many valuable conversations about MR codes over the years{.} 
{JB and VG further thank anonymous reviewers of IEEE BITS for their feedback.}

\let\oldthebibliography\thebibliography
\let\endoldthebibliography\endthebibliography
\renewenvironment{thebibliography}[1]{
  \begin{oldthebibliography}{#1}
    \setlength{\itemsep}{0.3em}
    \setlength{\parskip}{0em}
}
{
  \end{oldthebibliography}
}

\bibliographystyle{alphaurl}
\bibliography{BITS-ref-clean.bib}



\end{document}